\definecolor{darkred}{RGB}{150,0,0}
\definecolor{darkgreen}{RGB}{0,150,0}
\definecolor{darkblue}{RGB}{0,0,200}
\numberwithin{equation}{section}
\def \endprf{\hfill {\vrule height6pt width6pt depth0pt}\medskip}
\newenvironment{proof}{\noindent {\bf Proof} }{\endprf\par}
\newcommand{\EE}{\mathbb{E}}
\newcommand{\by}{\mathbf{y}}
\newcommand{\bv}{\mathbf{v}}
\newcommand{\cX}{\mathcal{X}}
\newcommand{\cV}{\mathcal{V}}
\newcommand{\cO}{\mathcal{O}}
\newcommand{\cR}{\mathcal{R}}
\newcommand{\cS}{\mathcal{S}}
\def\bbI{\mathbb{1}}
\newcommand{\eps}{\epsilon}
\newcommand{\tP}{\tilde{P}}
\newcommand{\tG}{\tilde{G}}
\newcommand{\bty}{\mathbf{\tilde{y}}}
\newcommand{\barA}{\bar{A}}
\newcommand{\pet}{p_{e_2}}
\newcommand{\wij}{w_{i,j}}
\newcommand{\wijl}{w_{i,j}^l}
\newcommand{\rijl}{r_{i,j}^l}
\newcommand{\tf}{\tilde{f}}
\newcommand{\bs}{\mathbf{s}}
\newenvironment{lbmatrix}[1]
  {\left[\array{@{}*{#1}{c}@{}}}
  {\endarray\right]}
\newtheorem{theorem}{\textbf{Theorem}}
\newtheorem{lemma}{\textbf{Lemma}}
\newtheorem{corollary}{\textbf{Corollary}}
\newtheorem{definition}{\textbf{Definition}}
\newtheorem{proposition}{\textbf{Proposition}}
\newtheorem{alg}{\textbf{Algorithm}}
\newtheorem{remark}{\textbf{Remark}}
\title{{\huge Spectral Alignment of Graphs}}
\date{}
\author{Soheil~Feizi$^{1}$, Gerald~Quon$^{2}$, Mariana~Recamonde-Mendoza$^{3}$, Muriel M{\'e}dard$^{4}$,\\ Manolis Kellis$^{4}$ and Ali~Jadbabaie$^{4}$\\
\vspace{-1.5cm}
\footnote{$^1$ Stanford University.}\\
\footnote{$^2$ University of California, Davis.}\\
\footnote{$^3$ Instituto de Informaìtica, Universidade Federal do Rio Grande do Sul, Porto Alegre, RS, Brazil.}\\
\footnote{$^4$ Massachusetts Institute of Technology (MIT).}}
\renewcommand\footnotemark{}
\begin{document}
\maketitle
\vspace{-1cm}
\begin{abstract}
Graph alignment refers to the problem of finding a bijective mapping across vertices of two graphs such that, if two nodes are connected in the first graph, their images are connected in the second graph. This problem arises in many fields such as computational biology, social sciences, and computer vision and is often cast as a quadratic assignment problem (QAP). Most standard graph alignment methods consider an optimization that maximizes the number of matches between the two graphs, ignoring the effect of mismatches. We propose a generalized graph alignment formulation that considers both matches and mismatches in a standard QAP formulation. This modification can have a major impact in aligning graphs with different sizes and heterogenous edge densities. Moreover, we propose two methods for solving the generalized graph alignment problem based on spectral decomposition of matrices. We compare the performance of proposed methods with some existing graph alignment algorithms including Natalie2, GHOST, IsoRank, NetAlign, Klau's approach as well as a semidefinite programming-based method over various synthetic and real graph models. Our proposed method based on simultaneous alignment of multiple eigenvectors leads to consistently good performance in different graph models. In particular, in the alignment of regular graph structures which is one of the most difficult graph alignment cases, our proposed method significantly outperforms other methods.
\end{abstract}

\section{Introduction}
The term graph alignment (or, network alignment) encompasses several distinct but related problem variants \cite{Trey2006modeling}. In general, graph alignment aims to find a bijective mapping across two (or more) graphs so that, if two nodes are connected in one graph, their images are also connected in the other graph(s). If such an exact alignment scheme exists, graph alignment can be simplified to the problem of {graph isomorphism} \cite{book1976graphbondy}. However, in general, an errorless alignment scheme may not be feasible. In such cases, graph alignment aims to find a mapping with the minimum error and/or the maximum overlap.

Graph alignment has a broad range of applications in systems biology, social sciences, computer vision, and linguistics. For instance, graph alignment has been used frequently as a comparative analysis tool in studying protein-protein interaction networks across different species \cite{isorank,isorankn,graemlin-flannick2006,graphmatching2009zaslavskiy,pathblast-kelley2004,networkblast-kalaev2008}. In computer vision, graph alignment has been used for image recognition by matching similar images \cite{visionconte2004thirty,visionschellewald2005probabilistic}. It has also been applied in ontology alignment to find relationships among different representations of a database \cite{ontology-lacoste2006word,ontology-melnik2002similarity}, and in user de-anonymization to infer user/sample identifications using similarity between datasets \cite{twitter-deanon}.

\begin{figure*}[t]
  \centering
      \includegraphics[width=14cm]{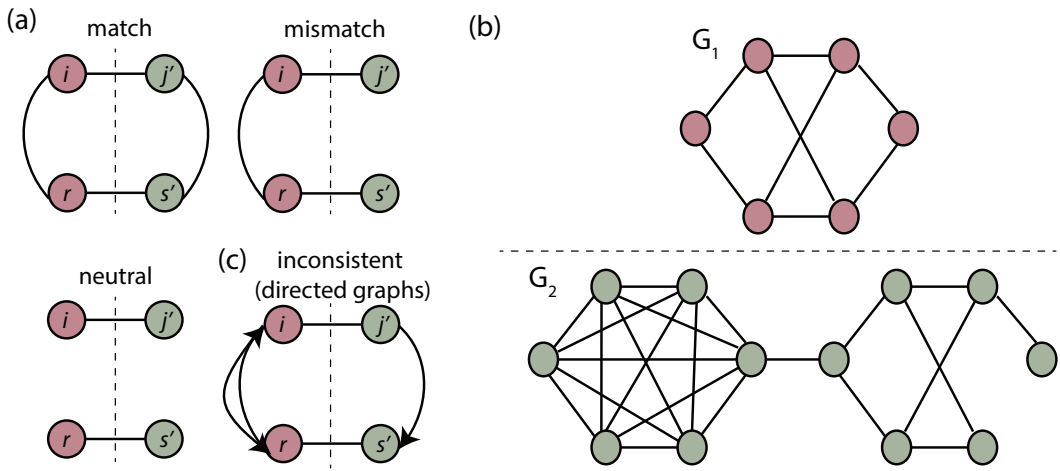}
  \caption{(a) An illustration of matched, mismatched, and neutral mappings for undirected graphs. (b) Example graphs to illustrate the effect of considering mismatches in the graph alignment formulation. (c) An illustration of inconsistent mappings for directed graphs where they are matches in one direction and mismatches in the other direction. }
  \label{fig:match-mismatch-illustration}
\end{figure*}

Here we study the graph alignment problem and make two main contributions. First, we propose a generalized formulation for the graph alignment optimization, and secondly we develop two graph alignment methods based on spectral decomposition of matrices. In the following we explain these contributions.

Let $G_1=(V_1,E_1)$ and $G_2=(V_2,E_2)$ be two graphs where $V_a$ and $E_a$ represent set of nodes and edges of graph $a=1,2$, respectively. By a slight abuse of notation, let $G_1$ and $G_2$ be their matrix representations where for $a=1,2$, $G_a(i,j)=1$ iff $(i,j)\in E_a$, and $G_a(i,j)=0$ otherwise. Suppose graph $a$ has $n_a$ nodes, i.e., $|V_a|=n_a$. Without loss of generality, we assume $n_1\leq n_2$. In the beginning, we assume graphs are undirected (i.e., matrices $G_1$ and $G_2$ are symmetric). We discuss the alignment of directed graphs, denoted by $G_1^{dir}$ and $G_2^{dir}$, in Section \ref{sec:eigenalign}.

Let $X$ be an $n_1\times n_2$ binary matrix where $X(i,j')=1$ means that node $i$ in graph $1$ is mapped (aligned) to node $j'$ in graph $2$. The pair $(i,j')$ is called a mapping edge across two graphs and is denoted by $i\leftrightarrow j'$. In the graph alignment setup, each node in one graph can be mapped to at most one node in the other graph, i.e., $\sum_{i} X(i,j')\leq 1$ for all $j'$, and similarly $\sum_{j'} X(i,j')\leq 1$ for all $i$. We also assume that there are no unaligned nodes in the graph with fewer nodes.

Matrix $X$ can map an edge in graph $G_1$ to an edge in graph $G_2$. These aligned edges are called {\it matches}. $X$ can map an edge in $G_1$ to a non-existing edge in $G_2$ and vice versa. These mapping pairs are called {\it mismatches}. Finally, $X$ can map a non-existing edge in $G_1$ to a non-existing edge in $G_2$. These pairs are called {\it neutrals}. Figure \ref{fig:match-mismatch-illustration}-a illustrates examples of matches, mismatches, and neutrals for simple graphs with two nodes. We have
\begin{align}\label{eq:match-mismatch-neutral-def}
&\text{\# of matches}=Tr \left(G_1 X G_2 X^T\right),\\
&\text{\# of mismatches}=Tr \left(G_1 X (\bbI-G_2) X^T+ G_1 X (\bbI-G_2) X^T\right),\nonumber\\
&\text{\# of neutrals}=Tr\left((\bbI-G_1) X (\bbI-G_2) X^T\right),\nonumber
\end{align}
where $\bbI$ represents a matrix of all ones and $Tr(.)$ is the trace operator. Most existing scalable graph alignment methods only consider maximizing the number of matches across two graphs while ignoring the number of resulting mismatches. This limitation can be critical particularly in cases where graphs have different sizes. We propose a generalized objective function for the graph alignment optimization as follows:
\begin{align}\label{eq:alignment-general-formulation}
\max_{X}~~ &s_1 (\#\text{ of matches})+s_2 (\#\text{ of neutrals})\\
&+ s_3 (\#\text{ of mismatches}),\nonumber
\end{align}
where $s_1$, $s_2$, and $s_3$ are scores assigned to matches, neutrals, and mismatches, respectively. We assume $s_1>s_2>s_3$. Considering $s_2=s_3=0$ results in ignoring effects of mismatches and neutrals. Substituting \eqref{eq:match-mismatch-neutral-def} in \eqref{eq:alignment-general-formulation}, we have the following equivalent optimization:
\begin{align}\label{eq:alignment-general-formulation-v2}
\max_{X}~~ Tr\left(G_1 X G_2 X^T\right)-\gamma \left(Tr\left(G_1 X \bbI X^T\right)+Tr\left(\bbI X G_2 X^T\right)\right),
\end{align}
where $\gamma=(s_2-s_3)/(s_1+s_2-2s_3)$ is the regularization parameter. Note $0\leq \gamma< 1/2$. If $s_2=s_3=0$, $\gamma=0$, while if $s_2\to s_1$ or $s_3\to-\infty$, $\gamma\to 1/2$. If $n_1=n_2$, $Tr\left(G_1 X \bbI X^T\right)$ and $Tr\left(\bbI X G_2 X^T\right)$ are equal to the number of edges in graphs $G_1$ and $G_2$, respectively. Thus these terms do not depend on $X$. However, if the number of nodes in $G_1$ and $G_2$ are different (say $n_1<n_2$), $Tr\left(\bbI X G_2 X^T\right)$ depends on $X$. Therefore, the regularization parameter $\gamma$ plays a role when the number of nodes in $G_1$ and $G_2$ are different. Note that in solving relaxations or approximations of optimization \eqref{eq:alignment-general-formulation-v2} when $X$ is no longer a permutation, $\gamma$ can have an effect even for the same size graphs.

To illustrate the effect of the regularization parameter, consider example graphs $G_1$ and $G_2$ illustrated in Figure \ref{fig:match-mismatch-illustration}-b.
Let $X_1$ and $X_2$ be mapping matrices that align nodes of $G_1$ to left and right subgraphs of $G_2$, respectively. The number of matches and mismatches caused by $X_1$ are $8$ and $7$, respectively. The number of matches and mismatches caused by $X_2$ are $7$ and $1$, respectively. If we ignore the effect of mismatches (i.e., $\gamma=0$ in \eqref{eq:alignment-general-formulation-v2}), $X_1$ leads to a larger graph alignment objective value compared to $X_2$. However, if $\gamma>1/6$, $X_2$ leads to a larger objective value compared to $X_1$. Note that maximizing matches while ignoring mismatches favors parts of the larger graph with a higher edge density.

It is important to note that the notion of mismatches has been considered in other alignment frameworks as well. For example \cite{visionschellewald2005probabilistic} considers aligning two images (modeled as graphs) knowing a pairwise similarity measure between nodes of the two graphs. Reference \cite{visionschellewald2005probabilistic} uses mismatch terms (ignoring matches) to incorporate relational structure terms in the alignment optimization. Our generalized graph alignment optimization \eqref{eq:alignment-general-formulation-v2} does not require having a similarity matrix between nodes of the two graphs and uses both match and mismatch information to compute the alignment matrix.

The objective function of optimization \eqref{eq:alignment-general-formulation-v2} is not in the standard form of a quadratic assignment problem (QAP) since it has three terms. It is straightforward to show that the following optimization is an equivalent formulation:
\begin{align}\label{eq:alignment-compact-formulation}
\max_{X}\quad Tr\left((G_1-\gamma\bbI) X (G_2-\gamma\bbI) X^T\right).
\end{align}
This optimization is a standard QAP \cite{quadratic-book-burkard2013} which is computationally challenging to solve. In the next section we explain our algorithmic contributions to compute a solution for this optimization based on spectral decomposition of functions of adjacency matrices.

Reference \cite{QAP-newlp-NPhard} shows that approximating a solution of maximum quadratic assignment problem within a factor better than $2^{\log^{1-\eps}n}$ is in general not feasible in polynomial time. However, owing to numerous applications of QAP in different areas, several algorithms have been designed to solve it approximately. Some methods use exact search approaches based on branch-and-bound \cite{branch-and-bound} and cutting plane \cite{cuttingplane}. These methods can only be applied to very small problem instances owing to their high computational complexity. Some methods attempt to solve the underlying QAP by linearizing the quadratic term and transforming the optimization into a mixed integer linear program (MILP) \cite{lawler1963quadratic,kaufman1978algorithm,frieze1983quadratic,adams1994improved}. In practice the very large number of introduced variables and constraints in linearization of the QAP objective function poses an obstacle for solving the resulting MILP efficiently. Some methods use convex relaxations of the QAP to compute a bound on its optimal value \cite{orthogonal-finke1987quadratic,projection-hadley1992new,convex-anstreicher2000lagrangian,convex-anstreicher2001solving,semidefinite-zhao1998}. The solutions provided by these methods may not be a feasible solution for the original quadratic assignment problem. Other methods to solve the graph alignment optimization include semidefinite \cite{splitting-sdp,semidefinite-zhao1998}, non-convex \cite{vogelstein2015fast}, or Lagrangian \cite{klau2009new,natalie,natalie2} relaxations, Bayesian inference \cite{Bayesiankolavr2012graphalignment}, message passing \cite{bayati2013message} or other heuristics \cite{isorank,isorankn,vision-spectral-main,vision-spectral-clustering,graphmatching2009zaslavskiy,kazemi2015growing,clark2015multiobjective,malod2015graal}. We will review these methods in Section \ref{sec:prior}. For more details about these methods, we refer readers to references \cite{quadratic-book-burkard2013,QAPsurvey-Elsevier,emmert2016fifty}. In particular \cite{emmert2016fifty} provides a recent review of graph alignment methods by distinguishing between methods for deterministic and random graphs.

Spectral inference methods have received significant attention in problems such as graph clustering  \cite{newman2006modularity,sussman,qin,athreya,saade} where the underlying mixed integer program is tightly approximated with an optimization whose optimizers can be computed efficiently. However, the use of spectral techniques in the graph alignment problem has been limited \cite{isorank,isorankn,vision-spectral-main,vision-spectral-clustering,schellewald2007evaluation,patro2012global}, partially owing to difficulty in connecting existing spectral graph alignment methods with relaxations of the underlying QAP. For example, \cite{isorank} computes an alignment across biological networks using the top eigenvector of a graph which encodes neighborhood similarities. Reference \cite{schellewald2007evaluation} uses a spectral relaxation of QAP to compute a probabilistic subgraph matching when the number of nodes of graphs are the same, while \cite{patro2012global} uses a heuristic multi-scale spectral signature of graphs to compute an alignment across them.

In this paper, we propose two spectral algorithms for solving the graph alignment optimization \eqref{eq:alignment-compact-formulation}, namely {\it EigenAlign} (EA), and {\it LowRankAlign} (LRA):

{\bf 1. EigenAlign (EA)} computes the leading eigenvector of a function of adjacency matrices followed by a maximum weight bipartite matching optimization. EigenAlign can be applied to both directed and undirected graphs. We prove that for Erd\H{o}s-R\'enyi graphs \cite{erdHos1961strength} and under some general conditions, EigenAlign is mean-field optimal \footnote{Finding an isomorphic mapping across asymptotically large Erd\H{o}s-R\'enyi graphs is a well studied problem and can be solved efficiently through canonical labeling \cite{isom-czajka2008improved}. Moreover Laszlo Babai has recently outlined his proof that the computational complexity of the general graph isomorphism problem is Quasipolynomial \cite{babai2016graph}. Note that in the graph alignment setup input graphs do not need to be isomorphic.}.

{\bf 2. LowRankAlign (LRA)} solves the graph alignment optimization by simultaneous alignment of eigenvectors of (transformations of) adjacency graphs, scaled by corresponding eigenvalues. LRA considers undirected graphs. LRA first solves the orthogonal relaxation of the underlying QAP using eigen decomposition of matrices. Then, it employs a rounding step as a projection in the direction of top eigenvectors of input matrices. We provide a bound on the performance of this projection step based on eigenvalues of input matrices and the orthogonal relaxation gap. Note that this rounding step is different than previously studied orthogonal projection, which has been shown to have a poor performance in practice \cite{schellewald2007evaluation}.

Through analytical performance characterization, simulations on several synthetic graphs, and real-data analysis, we show that our proposed graph alignment methods lead to improved performance compared to some existing graph alignment methods. Note that our proposed generalized graph alignment framework can also be adapted to some existing graph alignment packages. However, exploring this direction is beyond the scope of this article.

The rest of the paper is organized as follows. In Section \ref{sec:prior}, we review some existing graph alignment techniques and explain the relationship between graph alignment and graph isomorphisim. In Section \ref{sec:eigenalign}, we introduce the EigenAlign Algorithm and discuss its relationship with the underlying quadratic assignment problem. Moreover, we present the mean-field optimality of this method over random graphs, under some general conditions. In Section \ref{sec:lowrank}, we consider the trace formulation of the graph alignment optimization and introduce LowRankAlign. In Section \ref{sec:eval}, we compare performance of our method with some existing graph alignment methods over different synthetic graph structures. In Section \ref{sec:regulatory}, we use our graph alignment methods in comparative analysis of gene regulatory networks across different species.
\section{Review of Prior Work}\label{sec:prior}
Graph alignment problem \eqref{eq:alignment-compact-formulation} is an example of a QAP \cite{quadratic-book-burkard2013}. In the following we briefly summarize previous works by categorizing them into four groups and explain advantages and shortcomings of each. For more details on these methods we refer readers to references \cite{clark2014comparison,quadratic-book-burkard2013,QAPsurvey-Elsevier}.

{\bf 1. Exact search methods:} These methods provide a globally optimal solution for QAP. Examples of exact algorithms include methods based on branch-and-bound \cite{branch-and-bound} and cutting plane \cite{cuttingplane}. Owing to their high computational complexity, they can only be applied to very small problem instances.

{\bf 2. Linearizations:} These methods attempt to solve QAP by eliminating the quadratic term in the objective function, transforming it into a mixed integer linear program (MILP). An existing MILP solver is applied to find a solution for the relaxed problem. Examples of these methods are Lawler’s linearization \cite{lawler1963quadratic}, Kaufmann and Broeckx linearization \cite{kaufman1978algorithm}, Frieze and Yadegar linearization \cite{frieze1983quadratic}, and Adams and Johnson linearization \cite{adams1994improved}. These linearizations can provide bounds on the optimal value of the underlying QAP \cite{QAP-newlp-NPhard}. Moreover \cite{klau2009new,natalie,natalie2} use Lagrangian relaxations to compute a solution for the QAP. In general, linearization of the QAP objective function is achieved by introducing many new variables and new linear constraints. In practice, the very large number of introduced variables and constraints poses an obstacle for solving the resulting MILP efficiently.

{\bf 3. Semidefinite/convex relaxations:} These methods aim to compute a bound on the optimal value of the graph alignment optimization by considering the alignment matrix in the intersection of orthogonal and stochastic matrices. The provided solution by these methods may not be a feasible solution for the original quadratic assignment problem. Examples of these methods include orthogonal relaxations \cite{orthogonal-finke1987quadratic}, projected eigenvalue bounds \cite{projection-hadley1992new}, convex relaxations \cite{convex-anstreicher2000lagrangian,convex-anstreicher2001solving,semidefinite-zhao1998}, and matrix splittings \cite{splitting-sdp}. In particular, \cite{splitting-sdp} introduces a convex relaxation of the underlying graph alignment optimization based on matrix splitting which provides bounds on the optimal value of the underlying QAP. The proposed semidefinite programming (SDP) method provides a bound on the optimal value and additional steps are required to derive a feasible solution. Moreover, owing to its computational complexity, it can only be used to align small graphs \cite{splitting-sdp}.

In the computer vision literature, \cite{vision-spectral-main,vision-spectral-clustering} use spectral techniques to solve QAP approximately by inferring a cluster of assignments over the feature graph. Then, they use a greedy approach to reject assignments with low associations. Similarly, \cite{schellewald2007evaluation} uses a spectral relaxation of QAP to compute a probabilistic subgraph matching across images when the size of graphs are the same, while \cite{patro2012global} uses a heuristic multi-scale spectral signature of graphs to compute an alignment across them.

{\bf 4. Other methods:} There are several other techniques to solve graph alignment optimization approximately. Some methods use Bayesian framework \cite{Bayesiankolavr2012graphalignment}, or message passing \cite{bayati2013message}, or some other heuristics \cite{isorank,isorankn,graphmatching2009zaslavskiy}. In Section \ref{sec:eval}, we assess the performance of some of these graph alignment techniques through simulations.

Some graph alignment formulations aim to align paths \cite{pathblast-kelley2004} or subgraphs \cite{networkblast-kalaev2008,moduleali2009functionally,mohammadi2015triangular} across two (or multiple) graphs. The objective of these methods is different from the one of our graph alignment optimization where a bijective mapping across nodes of two graphs is desired according to a QAP. However solutions of these different methods may be related. For instance a bijective mapping across nodes of two graphs can provide information about conserved pathways and/or subgraphs across graphs, and vice versa.

The graph alignment formulation of \eqref{eq:alignment-compact-formulation} uses the structure of input graphs to find an alignment across their nodes. In practice, however, some other side information may be available such as node-node similarities. One way to incorporate such information in the formulation of \eqref{eq:alignment-compact-formulation} is to restrict the alignment across nodes of the two graphs whose similarities are greater than a threshold. This can be done by adding additional constraints to \eqref{eq:alignment-compact-formulation}. We will explain this in more detail in Section \ref{sec:eigenalign}.
\subsection{Graph Alignment and Graph Isomorphism}\label{sec:isomorphism}
The graph alignment optimization \eqref{eq:alignment-general-formulation-v2} is closely related to the problem of {\it graph isomorphism} defined as follows:
\begin{definition}[Graph Isomorphism]
Let $G_1=(V_1,E_1)$ and $G_2=(V_2,E_2)$ be two binary graphs. $G_1$ and $G_2$ are isomorphic if there exists a permutation matrix $P$ such that $G_1=P G_2 P^T$.
\end{definition}

The computational problem of determining whether two finite graphs are isomorphic is called the {\it graph isomorphism problem}. Moreover given two isomorphic graphs $G_1$ and $G_2$, in the graph isomorphism problem one aims to find the permutation matrix $P$ such that $G_1=P G_2 P^T$. The computational complexity of this problem is unknown \cite{complexity-isomorphisim-schweitzer}.

In the following lemma we formalize a connection between the graph alignment optimization and the classical graph isomorphism problem:
\begin{lemma}\label{lem:iso-align-relation1}
Let $G_1$ and $G_2$ be two isomorphic Erd\H{o}s-R\'enyi graphs \cite{erdHos1961strength} such that $Pr[G_1(i,j)=1]=p$ and $G_2=P G_1 P^T$, where $P$ is a permutation matrix. Let $p\neq 0,1$. Then, for any selection of scores $s_1>s_2>s_3>0$, $P$ maximizes the expected graph alignment objective function of Optimization \eqref{opt:alignment}. The expectation is over different realizations of $G_1$ and $G_2$.
\end{lemma}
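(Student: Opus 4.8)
The plan is to collapse the three-term objective into a monotone function of a single combinatorial quantity and then optimize that quantity directly.

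First, since $G_1$ and $G_2$ are isomorphic they have the same number of vertices, say $n$, so the constraints of Optimization~\eqref{opt:alignment} (every node matched to at most one node, no unmatched nodes in the smaller graph) force the variable $X$ to range over the $n\times n$ permutation matrices. For such an $X$ one has $X\bbI X^{T}=\bbI$, so summing the three identities in \eqref{eq:match-mismatch-neutral-def} gives $(\#\text{ matches})+(\#\text{ mismatches})+(\#\text{ neutrals})=Tr(\bbI\bbI)=n^{2}$, independently of $X$. Substituting this into the objective of \eqref{opt:alignment} rewrites it as $s_{3}n^{2}+(s_{1}-s_{3})(\#\text{ matches})+(s_{2}-s_{3})(\#\text{ neutrals})$, and both multipliers are strictly positive because $s_{1}>s_{2}>s_{3}$. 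Hence it suffices to show that the permutation realizing the isomorphism simultaneously maximizes $\EE[\#\text{ matches}]$ and $\EE[\#\text{ neutrals}]$, the expectation being over $G_1$ (which determines $G_2=PG_1P^T$).

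Next I would use $G_{2}(k,l)=G_{1}(\pi(k),\pi(l))$ for the permutation $\pi$ associated with $P$ and expand $\#\text{ matches}=Tr(G_{1}XG_{2}X^{T})$ over ordered node pairs. Writing $\sigma$ for the permutation of $X$ and $\tau:=\pi\circ\sigma$, the pair $(i,j)$ contributes $G_{1}(i,j)\,G_{1}(\tau(i),\tau(j))$. The essential observation — and the only place the hypothesis $p\neq 0,1$ is used — is that $G_{2}$ is a deterministic relabelling of $G_{1}$, so the two factors are the \emph{same} Bernoulli$(p)$ variable when $\{\tau(i),\tau(j)\}=\{i,j\}$ (by symmetry of $G_{1}$) and \emph{independent} Bernoulli$(p)$ variables otherwise; the expected contribution is $p$ in the first case and $p^{2}<p$ in the second. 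Letting $A(X)$ be the number of unordered pairs fixed setwise by $\tau$, one gets $\EE[\#\text{ matches}]=n(n-1)p^{2}+2\,p(1-p)\,A(X)$ (diagonal terms contribute a constant; the factor $2$ is the undirected ordered double count), which is strictly increasing in $A(X)$. The identical computation with $\bbI-G_{1}$ in place of $G_{1}$ yields $\EE[\#\text{ neutrals}]=n+n(n-1)(1-p)^{2}+2\,p(1-p)\,A(X)$, again strictly increasing in $A(X)$ since $0<p<1$.

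It then remains to maximize $A(X)$ over permutation matrices. As $\sigma$ ranges over all permutations so does $\tau=\pi\circ\sigma$, and the number of unordered pairs fixed setwise by $\tau$ is at most $\binom{n}{2}$, with equality iff $\tau=\mathrm{id}$: if $\tau(a)\neq a$ then none of the pairs $\{a,c\}$ with $c\notin\{a,\tau(a)\}$ is fixed setwise, losing at least $n-2$ pairs (for $n\le 2$ the maximum is attained by every $\tau$, which still suffices). Hence $A(X)$ is maximized precisely at $\tau=\mathrm{id}$, i.e. $\sigma=\pi^{-1}$, i.e. $X$ equal to the permutation matrix realizing the isomorphism; this single $X$ maximizes both expected counts and therefore, by the reduction above, the whole expected objective for any admissible scores. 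I expect the only genuine work to be the expectation bookkeeping of the third paragraph — carefully separating the coincident-entry case from the independent-entry case — whereas the combinatorial step about $A(X)$ and the reduction in the second paragraph are routine.
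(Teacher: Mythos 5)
Your proof is correct, but it takes a different route from the paper's. The paper argues directly on the alignment-matrix quadratic form: for an arbitrary permutation $\tP$ it writes $\tfrac{1}{n^2}\EE[\bty^T A\bty]$ as a $\rho$-weighted combination of a per-pair expected score $p s_1+(1-p)s_2$ (pairs consistent with the true mapping) and $p^2 s_1+(1-p)^2 s_2+2p(1-p)s_3$ (disturbed pairs), and then uses the convexity-type inequality $2s_3<s_1+s_2$, which follows from $s_1>s_2>s_3$, to conclude the second quantity is strictly smaller, so the true permutation attains the largest expected value $n^2\bigl(p s_1+(1-p)s_2\bigr)$. You instead exploit the constant-sum identity $(\#\text{matches})+(\#\text{mismatches})+(\#\text{neutrals})=n^2$ for permutation matrices to eliminate the mismatch score, reduce the objective to a positive combination of expected matches and expected neutrals, and then show both are the same increasing affine function of a single combinatorial quantity $A(X)$ (unordered pairs preserved setwise by $\tau=\pi\circ\sigma$), maximized exactly at the isomorphism. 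The underlying probabilistic fact is shared --- off the true mapping the two edge indicators are independent, giving $p^2<p$ (and $(1-p)^2<1-p$) --- but your accounting is exact where the paper's is mean-field-flavored: its split into fractions $(1-\rho)$ and $\rho$ is coarse (the both-correct fraction of ordered pairs is $(1-\rho)^2$, and pairs with both endpoints misaligned can still be preserved setwise, e.g.\ under a transposition), whereas your pair-by-pair computation yields an exact formula, gives strictness of the maximizer for $n\ge 3$, and shows that expected matches and expected neutrals are individually maximized. What the paper's per-pair score comparison buys in exchange is that it transfers verbatim to the noisy setting of Lemma \ref{lem:iso-align-relation2} by just recomputing the two per-pair constants, while your constant-sum reduction and the coincidence-versus-independence dichotomy would need to be redone with the noise model. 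One small caveat, shared with the paper: with the convention $G_2=PG_1P^T$, the permutation matrix achieving $\tau=\mathrm{id}$ is $P^T$ or $P$ depending on how $X$ is oriented in $Tr(G_1XG_2X^T)$; like the paper, you gloss this, and it is harmless since it is the mapping realizing the isomorphism in either convention.
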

\begin{proof}
The proof is presented in Section \ref{subsec:proof-lem:iso-align-relation2}.
\end{proof}

The result of Lemma \ref{lem:iso-align-relation1} can be extended to the case where edges of graphs are flipped through a random noise matrix:
\begin{lemma}\label{lem:iso-align-relation2}
Let $G_1$ be an Erd\H{o}s-R\'enyi graph such that $Pr[G_1(i,j)=1]=p$. Let $\tG_1$ be a graph resulting from flipping edges of $G_1$ independently and randomly with probability $q$. Suppose $G_2=P \tG_1 P^T$ where $P$ is a permutation matrix. Let $0<p<1/2$ and $0\leq q<1/2$. Then, for any selection of scores $s_1>s_2>s_3>0$, $P$ maximizes the expected graph alignment objective function of Optimization \eqref{eq:alignment-general-formulation-v2}. The expectation is over different realizations of $G_1$ and $G_2$.
\end{lemma}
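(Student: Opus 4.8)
I would mirror the argument used for Lemma~\ref{lem:iso-align-relation1}, now carrying the flip noise through the expectation. Since $G_2=P\tG_1P^T$ forces $n_1=n_2=n$, every feasible $X$ is an $n\times n$ permutation matrix, and for such $X$ the terms $Tr(G_1X\bbI X^T)$ and $Tr(\bbI X G_2 X^T)$ equal twice the number of edges of $G_1$ and of $\tG_1$ respectively, hence are independent of $X$; so for every realization of $(G_1,\tG_1)$ the objective of \eqref{eq:alignment-general-formulation-v2} equals $Tr(G_1XG_2X^T)$ plus an $X$-free (random) constant. Writing $M(X):=Tr(G_1XG_2X^T)$ for the number of matches and taking expectations, it therefore suffices to show that $\EE[M(X)]$ is maximized over permutation matrices at $X=P$. (This also disposes of the ``for any scores $s_1>s_2>s_3>0$'' clause at once, since the scores enter \eqref{eq:alignment-general-formulation-v2} only through $\gamma\in[0,1/2)$, which the above shows to be irrelevant to the $\arg\max$ here.)

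Next I would reduce $\EE[M(X)]$ to a counting problem. Let $\sigma$ be the permutation represented by $P$ and $\pi$ the one represented by $X$; using $G_2=P\tG_1P^T$ gives $M(X)=\sum_{i,j}G_1(i,j)\,\tG_1(\tau(i),\tau(j))$ with $\tau:=\sigma^{-1}\pi$, so $X=P$ corresponds to $\tau=\mathrm{id}$. For an off-diagonal ordered pair $(i,j)$, either $\{\tau(i),\tau(j)\}=\{i,j\}$, in which case $\tG_1(\tau(i),\tau(j))=\tG_1(i,j)$ by symmetry and $\EE[G_1(i,j)\tG_1(i,j)]=p(1-q)$ (an existing edge survives a flip with probability $1-q$); or $\{\tau(i),\tau(j)\}\neq\{i,j\}$ --- and since $\pi$ is injective these are then genuinely distinct unordered slots --- in which case $G_1(i,j)$ and $\tG_1(\tau(i),\tau(j))$ are independent (independence of the Erd\H{o}s-R\'enyi edges and of the flips), contributing $p\,p'$ with $p':=p(1-q)+(1-p)q$. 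Diagonal pairs contribute a deterministic constant and are irrelevant. Letting $f(\tau)$ be the number of off-diagonal ordered pairs fixed setwise by $\tau$, this yields $\EE[M(X)]=f(\tau)\,p(1-q)+\big(n(n-1)-f(\tau)\big)\,p\,p'$.

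Finally I would check monotonicity and conclude: $p(1-q)-p\,p'=p\,(1-q-p')=p\,(1-p)(1-2q)$, which is strictly positive under the hypotheses $0<p<1/2$ and $0\le q<1/2$. Hence $\EE[M(X)]$ is increasing in $f(\tau)$, and $f(\tau)\le n(n-1)$ with the bound attained at $\tau=\mathrm{id}$; therefore $\EE[M(X)]$, and with it the expected objective of \eqref{eq:alignment-general-formulation-v2}, is maximized at $X=P$. I expect the only delicate points to be the independence bookkeeping in the second case --- confirming that two distinct unordered slots really decouple $G_1(i,j)$ from $\tG_1(\tau(i),\tau(j))$ once both the Erd\H{o}s-R\'enyi randomness and the flip noise are accounted for --- and keeping the hypotheses straight so that the sign of $(1-p)(1-2q)$ comes out right; the rest is the same ``collapse to counting matches'' that underlies Lemma~\ref{lem:iso-align-relation1}.
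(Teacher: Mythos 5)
Your proposal is correct, and it reaches the conclusion by a genuinely cleaner route than the paper's own argument, though the quantitative heart is the same. The paper keeps all three scores in play: it works on the alignment graph $A$, writes $\tfrac{1}{n^2}\EE[\bty^T A \bty]$ for a competing permutation $\tP$ as a convex combination $(1-\rho)a'+\rho b'$ of the expected score $a'$ of a pair of correct mappings and the expected score $b'$ of a pair involving an incorrect mapping, and finishes by showing $a'>b'$, which is where $s_1>s_2>s_3>0$, $p<1/2$ and $q<1/2$ enter. You instead first strip the scores away, noting that for equal-sized graphs and permutation matrices the two regularization terms in \eqref{eq:alignment-general-formulation-v2} are $X$-independent, so only the expected number of matches matters and the ``for any scores'' clause is automatic; then you do exact bookkeeping, $\EE[Tr(G_1XG_2X^T)]=f(\tau)\,p(1-q)+\bigl(n(n-1)-f(\tau)\bigr)pp'$ with $f(\tau)$ the number of off-diagonal ordered pairs fixed setwise by $\tau=\sigma^{-1}\pi$, which is maximized at $\tau=\mathrm{id}$. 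Your gap $p(1-q)-pp'=p(1-p)(1-2q)$ is exactly the match component of the paper's $a'-b'$, so the two proofs agree where it counts, but your version buys two things: the setwise-fixed-pair count is exact (the paper's $(1-\rho)a'+\rho b'$ split is a coarser, mean-field-style accounting that does not track that the fraction of fully correct ordered pairs is quadratic in $1-\rho$, nor that pairs carried onto themselves by a transposition behave like correct pairs -- harmless for the conclusion, but loose), and your algebra makes visible that for this reduction only $0<p<1$, $q<1/2$ are needed, whereas the paper's computation leans on $p<1/2$ to sign an additional $q(1-2p)s_3$ term. The independence point you flagged is indeed the only delicate step, and your handling is right: when $\{\tau(i),\tau(j)\}\neq\{i,j\}$ the two unordered edge slots are distinct, so $G_1(i,j)$ is independent of $G_1(\tau(i),\tau(j))$ and of the flip noise at that slot, hence of $\tG_1(\tau(i),\tau(j))$.
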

\begin{proof}
The proof is presented in Section \ref{subsec:proof-lem:iso-align-relation2}.
\end{proof}

Finding an isomorphic mapping across sufficiently large Erd\H{o}s-R\'enyi graphs can be done efficiently with high probability (w.h.p.) through canonical labeling \cite{isom-czajka2008improved}. Canonical labeling of a graph consists of assigning a unique label to each vertex such that labels are invariant under isomorphism. The graph isomorphism problem can then be solved efficiently by mappings nodes with the same canonical labels to each other \cite{babai1979canonical}. One example of canonical labeling is the degree neighborhood of a vertex defined as a sorted list of neighborhood degrees of vertices \cite{isom-czajka2008improved}. Note that graph alignment formulation is more general than the one of graph isomorphism: graph alignment aims to find an optimal mappings across two graphs which are not necessarily isomorphic.

\section{EigenAlign Algorithm}\label{sec:eigenalign}
\subsection{Problem Formulation and Notation}\label{sec:problem-formulation}
Let $\by$ be a vectorized version of $X$. That is, $\by$ is a vector of length $n_1 n_2$ where, $y(i+(j'-1)n_1)=X(i,j')$. To simplify notation, define $y_{i,j'}\triangleq X(i,j')$. Two mappings $(i,j')$ and $(r,s')$ can be matches which cause overlaps, can be mismatches which cause errors, or can be neutrals (Figure \ref{fig:match-mismatch-illustration}-a).
\begin{definition}\label{def:match-mismatch}
Suppose $G_1=(V_1,E_1)$ and $G_2=(V_2,E_2)$ are undirected graphs. Let $\{i,r\}\subseteq V_1$ and $\{j',s'\}\subseteq V_2$ where $X(i,j')=1$ and $X(r,s')=1$. Then,
\begin{itemize}
\item[-] $(i,j')$ and $(r,s')$ are {\it matches} if $(i,r)\in E_1$ and $(j',s')\in E_2$.
\item[-] $(i,j')$ and $(r,s')$ are {\it mismatches} if only one of the edges $(i,r)$ and $(j',s')$ exists.
\item[-] $(i,j')$ and $(r,s')$ are {\it neutrals} if none of the edges $(i,r)$ and $(j',s')$ exists.
\end{itemize}
\end{definition}

Definition \ref{def:match-mismatch} can be extended to the case where $G_1$ and $G_2$ are directed graphs. In this case mappings $(i,j')$ and $(r,s')$ are matches/mismatches if they are matches/mismatches in one of the possible directions. However it is possible to have these mappings be matches in one direction while they are mismatches in the other direction (Figure \ref{fig:match-mismatch-illustration}-c). These mappings are denoted as {\it inconsistent mappings}, defined as follows:
\begin{definition}\label{def:inconsistent}
Let $G_1=(V_1,E_1)$ and $G_2=(V_2,E_2)$ be two directed graphs and $\{i,r\}\subseteq V_1$ and $\{j',s'\}\subseteq V_2$ where $X(i,j')=1$ and $X(r,s')=1$. If edges $i \to r$, $r \to i$, and $j' \to s'$ exist, however, $s' \to j'$ does not exist, then mappings $(i,j')$ and $(r,s')$ are {\it inconsistent}.
\end{definition}

Consider two undirected graphs $G_1=(V_1,E_1)$ and $G_2=(V_2,E_2)$. We form an {\it alignment graph} represented by adjacency matrix $A$ in which nodes are mapping edges across the original graphs, and the edges capture whether the pair of mapping edges are matches, mismatches or neutrals (Figure \ref{fig:framework}).

\begin{definition}\label{def:alignment-network}
Let $\{i,r\}\subseteq V_1$ and $\{j',s'\}\subseteq V_2$ where $X(i,j')=1$ and $X(r,s')=1$.
\begin{equation}\label{eq:alignment-net-expanded-form}
A\big[(i,j'),(r,s')\big]=\begin{cases}
    s_1,& \text{if } (i,j') \text{ and } (r,s') \text{ are matches},\\
    s_2,& \text{if } (i,j') \text{ and } (r,s') \text{ are neutrals},\\
    s_3, & \text{if } (i,j') \text{ and } (r,s') \text{ are mismatches},
\end{cases}
\end{equation}
where $s_1$, $s_2$, and $s_3$ are scores assigned to matches, neutrals, and mismatches, respectively. Without loss of generality we assume $s_1>s_2>s_3>0$.
\end{definition}

We can re-write \eqref{eq:alignment-net-expanded-form} as follows:
\begin{align}\label{eq:alignment-net}
 A\big[(i,j'),(r,s')\big]=&(s_1+s_2-2s_3)G_1(i,r)G_2(j',s')\\
 &+(s_3-s_2)(G_1(i,r)+G_2(j',s'))+s_2.\nonumber
\end{align}
We can summarize \eqref{eq:alignment-net-expanded-form} and \eqref{eq:alignment-net} as follows:
\begin{align}\label{eq:alignment-network-kron-product}
A=&(s_1+s_2-2s_3)(G_1\otimes G_2)+(s_3-s_2) (G_1\otimes \bbI_{n_2})\\
&+ (s_3-s_2) (\bbI_{n_1}\otimes G_2)+ s_2 (\bbI_{n_1} \otimes \bbI_{n_2}),\nonumber
\end{align}
where $\otimes$ represents matrix Kronecker product, and $\bbI_{n}$ is an $n\times n$ matrix whose elements are all ones.

A similar scoring scheme can be used for directed graphs. When graphs are directed, some mappings can be inconsistent according to Definition \ref{def:inconsistent}, i.e., they are matches in one direction and mismatches in another. Scores of inconsistent mappings can be assigned randomly to matched/mismatched scores, or to an average score of matches and mismatches (i.e., $(s_1+s_3)/2$). For random graphs, inconsistent mappings are rare events. For example, suppose graph edges are distributed according to a Bernoulli distribution with parameter $p$. Then, the probability of having an inconsistent mapping for a particular pair of paired nodes across graphs is equal to $4p^3 (1-p)$. Therefore, their effect in graph alignment is negligible, particularly for large sparse graphs. Throughout the paper, for directed graphs we assume inconsistent mappings have negligible effect, unless we mention the importance of such inconsistency explicitly.

In practice some mapping edges across two graphs may not be possible, owing to additional side information. The set of possible mapping edges across two graphs is denoted by $\cR=\{(i,j'):i\in V_1, j'\in V_2\}$. If $\cR=V_1\times V_2$, the problem of graph alignment is called {\it unrestricted}. If some mappings across two graphs are prevented (i.e., $X(i,j')=y_{i,j'}=0$, for $(i,j')\notin \cR$), then the problem of graph alignment is called {\it restricted}.

Using the vectorized version of $X$, the graph alignment optimization \eqref{eq:alignment-general-formulation} can be written as follows:
\begin{align}\label{opt:alignment}
\max_{\by} \quad&  \by^T A \by, \\
&\sum_{i}y_{i,j'}\leq 1, \quad \forall i\in V_1, \nonumber\\
&\sum_{j'}y_{i,j'}\leq 1, \quad \forall j'\in V_2, \nonumber\\
&y_{i,j'}\in\{0,1\}, \quad \forall (i,j')\in V_1\times V_2, \nonumber\\
&y_{i,j'}= 0, \quad\quad\quad\! \forall (i,j')\notin \cR, \nonumber
\end{align}
where $A$ is defined according to \eqref{eq:alignment-net} and $\cR\subseteq V_1\times V_2$ is the set of possible mapping edges across two graphs.

\subsection{EigenAlign Algorithm}\label{subsec:eigenalign-alg}
We now introduce {\it EigenAlign} (EA) algorithm which computes a solution for the graph alignment optimization \eqref{opt:alignment} leveraging spectral properties of graphs:
\begin{figure}[t]
  \centering
      \includegraphics[height=6cm]{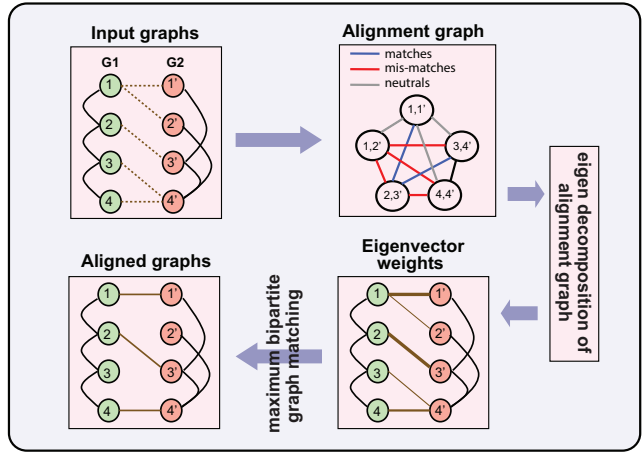}
  \caption{The Framework of EigenAlign algorithm \ref{alg:eigenalign}.}
  \label{fig:framework}
\end{figure}
\begin{alg}[EigenAlign Algorithm]\label{alg:eigenalign}
Let $G_1=(V_1,E_1)$ and $G_2=(V_2,E_2)$ be two binary graphs whose corresponding alignment graph is denoted by $A$ according to \eqref{eq:alignment-net}. EigenAlign algorithm solves the graph alignment optimization \eqref{opt:alignment} in two steps:

{\bf Step 1, An Eigenvector Computation Step:} In this step we compute $\bv$, an eigenvector of the alignment graph $A$ with the maximum eigenvalue.

{\bf Step 2, A Linear Assignment Step:} In this step we solve the following maximum weight bipartite matching optimization:
\begin{align}\label{opt:eigenalign}
\max_{\by} \quad & \bv^T \by, \\
&\sum_{j'}y_{i,j'}\leq 1, \quad \forall i\in V_1, \nonumber\\
&\sum_{i}y_{i,j'}\leq 1, \quad \forall j'\in V_2, \nonumber\\
&y_{i,j'}\in\{0,1\}, \quad \forall (i,j')\in  V_1\times  V_2, \nonumber\\
&y_{i,j'}=0, \quad\quad \forall (i,j')\notin \cR. \nonumber
\end{align}

\end{alg}
Algorithm \ref{alg:eigenalign} can be extended to directed graphs $G_1^{dir}$ and $G_2^{dir}$ as well. This framework is depicted in Figure \ref{fig:framework}. Below we provide intuition on different steps of the EigenAlign algorithm. For simplicity we assume all mappings across graphs are possible (i.e., $\cR=\{(i,j'):\forall i\in V_1, \forall j'\in V_2\}$). In the restricted graph alignment setup, without loss of generality, one can eliminate rows and columns of the alignment matrix corresponding to mappings that are not allowed.

In the eigen decomposition step of EigenAlign, we ignore bijective constraints (i.e., constraints $\sum_{i}y_{i,j'}\leq 1$ and $\sum_{j'}y_{i,j'}\leq 1$) because they will be satisfied in the second step of the algorithm through a linear optimization. Under these assumptions Optimization \eqref{opt:alignment} can be simplified to the following optimization:
\begin{align}\label{opt:relax1}
\quad\quad\quad\quad\quad\quad\quad\quad\quad\quad\max_{\by}  \quad & \by^T A \by, \\
&y_{i,j'}\in\{0,1\}, \quad \forall (i,j')\in V_1\times V_2. \nonumber
\end{align}
To approximate a solution of this optimization, we replace integer constraints with constraints over a hyper-sphere restricted by hyper-planes (i.e., $\|\by\|_{2}\leq 1$ and $\by\geq 0$). Thus, optimization \eqref{opt:relax1} is simplified to the following:
\begin{align}\label{opt:relax2}
\max_{\by} \quad & \by^T A \by, \\
&\|\by\|_{2} \leq 1, \nonumber\\
& \by\geq 0. \nonumber
\end{align}
In the following, we show that $\bv$, the leading eigenvector of the alignment matrix $A$, is an optimal solution of Optimization \eqref{opt:relax2}. Suppose $\by_1$ is an optimal solution of Optimization \eqref{opt:relax2}. Let $\by_2$ be a solution of the following optimization without non-negativity constraints:
\begin{align}\label{opt:relax3}
\max_{\by} \quad & \by^T A \by, \\
&\|\by\|_2 \leq 1. \nonumber
\end{align}
Following the Rayleigh-Ritz formula \cite{trefethen1997numerical}, the leading eigenvector of the alignment matrix is an optimal solution of Optimization \eqref{opt:relax3} (i.e., $\by_2=\bv$). Now we use the following theorem to show that in fact $\by_1=\bv$:
\begin{theorem}\label{thm:perron}
Suppose $A$ is a matrix whose elements are strictly positive. Let $\bv$ be an eigenvector of $A$ corresponding to the largest eigenvalue. Then, $\forall i$, $v_i>0$. Moreover, all other eigenvectors must have at least one negative, or non-real component.
\end{theorem}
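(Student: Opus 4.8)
The statement is a form of the Perron--Frobenius theorem for strictly positive matrices, so the plan is to follow the classical Perron--Frobenius argument. In the EigenAlign setting the matrix $A$ of \eqref{eq:alignment-network-kron-product} is symmetric (Kronecker products and sums of the symmetric matrices $G_1$, $G_2$, and the all-ones matrices), and Theorem~\ref{thm:perron} is only invoked together with the Rayleigh--Ritz characterization already used for \eqref{opt:relax3}; I would therefore give the proof under the symmetric assumption and then remark on the general case.

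First I would establish that the leading eigenvalue admits a strictly positive eigenvector. Let $\lambda$ be the largest eigenvalue and $\bv$ an eigenvector. Replacing $\bv$ by its entrywise absolute value $|\bv|$ preserves the $\ell_2$ norm and, since $A$ has strictly positive entries, cannot decrease the quadratic form $\bv^T A \bv$; hence $|\bv|$ also maximizes the Rayleigh quotient and so is itself an eigenvector for $\lambda$. Thus I may take $\bv \geq 0$, $\bv \neq 0$; then $A\bv > 0$ because every entry of $A$ is positive and $\bv$ is nonnegative and nonzero, and $A\bv = \lambda\bv$ forces $\lambda > 0$ and $\bv = \lambda^{-1}A\bv > 0$. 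Next, to see that the leading eigenspace is one-dimensional (so that every leading eigenvector is, after sign normalization, this positive $\bv$), suppose $\bw$ is a leading eigenvector independent of $\bv$; take it real and orthogonal to $\bv$. Then $|\bw|$ is again a leading eigenvector by the same computation, hence strictly positive by the previous paragraph, and the equality case $\bw^TA\bw = |\bw|^TA|\bw|$ (valid since both equal $\lambda\|\bw\|_2^2$) together with positivity of the entries of $A$ forces $w_i w_j = |w_i||w_j|$ for all $i,j$, i.e.\ all entries of $\bw$ share a common sign; but a vector whose entries all have one sign cannot be orthogonal to the strictly positive $\bv$, a contradiction. This establishes $v_i > 0$ for all $i$.

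For the second assertion, let $\bw$ be any eigenvector not proportional to $\bv$, with eigenvalue $\mu$; by the simplicity just shown, $\mu \neq \lambda$. Write $\bw = \bw_1 + i\bw_2$ with $\bw_1,\bw_2$ real. Each of $\bw_1,\bw_2$ is either zero or a real eigenvector of the symmetric matrix $A$ for the eigenvalue $\mu \neq \lambda$, hence orthogonal to $\bv$. If $\bw$ had all components real and nonnegative, then $\bw = \bw_1 \geq 0$ with $\bw_1 \perp \bv$ and $\bv > 0$, which forces $\bw_1 = 0$ and contradicts $\bw \neq 0$. Therefore $\bw$ has at least one negative or non-real component.

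The one delicate point is the equality analysis: deducing from $|\bv|^TA|\bv| = \bv^TA\bv$ (and its analogue for $\bw$) that $|\bv|$ is genuinely an eigenvector and that the signs of its entries are aligned. This is exactly where strict positivity of the entries of $A$ is essential; the remaining ingredients (Rayleigh--Ritz, orthogonality of eigenspaces of a symmetric matrix) are routine. For a fully general, possibly non-symmetric $A$, one instead uses the Collatz--Wielandt formula $\lambda = \max\{\,t \geq 0 : \exists\,\bx \geq 0,\ \bx \neq 0,\ A\bx \geq t\bx\,\}$, shows the maximizer is a strictly positive eigenvector, and uses a strictly positive left eigenvector of $A$ to pin down uniqueness; the logical skeleton is unchanged.
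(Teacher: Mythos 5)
Your proposal is correct, but it necessarily differs from the paper in kind: the paper does not prove Theorem~\ref{thm:perron} at all, it simply cites the classical Perron--Frobenius theorem (Theorem~1 of the Pillai--Suel--Cha reference), whereas you supply a self-contained argument. Your variational route --- replace $\bv$ by $|\bv|$, use the Rayleigh--Ritz equality case to see $|\bv|$ is again a leading eigenvector, deduce strict positivity from $A|\bv|>0$, then get simplicity from the term-by-term equality $A_{ij}w_iw_j=A_{ij}|w_i||w_j|$, and finally use orthogonality of eigenspaces to rule out nonnegative eigenvectors for other eigenvalues --- is sound, and the restriction to symmetric $A$ is harmless for the paper's use, since the alignment matrix of \eqref{eq:alignment-net} (equivalently \eqref{eq:alignment-network-kron-product}) is symmetric and the theorem is invoked exactly alongside the Rayleigh--Ritz step for \eqref{opt:relax3}. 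What your approach buys is elementarity and transparency about where strict positivity of the entries enters (the equality analysis); what the citation buys is the full generality of the statement as written, namely arbitrary (possibly non-symmetric) strictly positive matrices, for which "largest eigenvalue" must be read as the spectral radius and your Collatz--Wielandt remark is only a sketch, not a proof. It is also worth flagging, as you implicitly do by normalizing signs, that the theorem's phrasing is loose: $-\bv$ (or $i\bv$) is equally a leading eigenvector, so the correct reading --- the one you prove --- is that the leading eigenvalue is simple with a strictly positive representative, and no eigenvector outside that ray can have all components real and nonnegative.
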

\begin{proof}
See e.g., reference \cite{pillai2005perron} (Theorem 1).
\end{proof}

Since $\by_2$ is a solution of Optimization \eqref{opt:relax3}, we have $\by_2^T A \by_2 \geq \by_1^T A \by_1$. Using this inequality along with the Perron-Frobenius Theorem lead to $\by_1=\bv$, as the unique solution of optimization \eqref{opt:relax2}.

The solution of the eigen decomposition step assigns weights to all possible mapping edges across graphs ignoring bijective constraints (constraints $\sum_{j'}y_{i,j'}\leq 1$ and $\sum_{i}y_{i,j'}\leq 1$). However, in the graph alignment setup, each node in one graph can be mapped to at most one node in the other graph. To satisfy these constraints, we use eigenvector weights in a linear optimization framework of maximum weight bipartite matching setup of Optimization \eqref{opt:eigenalign} \cite{book-graphtheory-west2001introduction}.

\subsection{Computational Complexity of EigenAlign}
Let the number of nodes of graphs $G_1$ and $G_2$ be $\cO(n)$. Let $k=|\cR|$ be the number of possible mappings across two graphs. In an unrestricted graph alignment setup, we have $k=\cO(n^2)$. However, in a restricted graph alignment, $k$ may be significantly smaller than $n^2$. EigenAlign has three steps:

(i) Forming an alignment graph $A$ that has a computational complexity of $\cO(k^2)$, as all pairs of possible mappings should be considered.

(ii) An eigen decomposition step where we compute the leading eigenvector of the alignment graph. This operation can be performed in $\cO(k^2)$ computational complexity using QR algorithms and/or power methods \cite{complexity-eigen}. Therefore, the worst case computational complexity of this part is $\cO(k^2)$.

(iii) A maximum weight bipartite matching algorithm step, that can be solved efficiently using linear programming or the Hungarian algorithm \cite{book-graphtheory-west2001introduction}. The worst case computational complexity of this step is $\cO(n^3)$. If the set $\cR$ has a specific structure (e.g., small subsets of nodes in one graph are allowed to be mapped to small subsets of nodes in the other graph), this cost can be reduced significantly.
\begin{proposition}\label{prop:complexity}
The worst case computational complexity of the EigenAlign Algorithm is $\cO(k^2+n^3)$.
\end{proposition}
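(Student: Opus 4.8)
The claim simply aggregates the three per-stage costs already isolated in items (i)--(iii) above, so the plan is to verify each of the three bounds separately and add them. First I would treat the construction of the alignment matrix: by \eqref{eq:alignment-network-kron-product}, $A$ is a $k\times k$ matrix (after deleting the rows and columns indexed by disallowed mappings in the restricted case), and by \eqref{eq:alignment-net} each entry $A[(i,j'),(r,s')]$ is a fixed linear combination of $G_1(i,r)$, $G_2(j',s')$ and their product, hence computable in $\cO(1)$ time once the adjacency matrices are available; filling in all $k^2$ entries therefore costs $\cO(k^2)$.

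Next the eigenvector step. Here I would invoke Theorem~\ref{thm:perron}: since $A$ has strictly positive entries, its dominant eigenvalue is simple and strictly larger in modulus than every other eigenvalue, so power iteration (or a Lanczos/QR-based routine, cf.\ \cite{complexity-eigen}) converges geometrically to $\bv$. Each iteration is a single matrix--vector multiplication costing $\cO(k^2)$ for the dense $k\times k$ matrix $A$, and the number of iterations needed to reach a fixed target accuracy does not grow polynomially in $k$, so this step is $\cO(k^2)$ as well. Finally, the linear assignment step is a maximum-weight bipartite matching on the complete bipartite graph between $V_1$ and $V_2$, where $|V_1|,|V_2|=\cO(n)$ and the edge weights are read off from $\bv$; the Hungarian algorithm (or an equivalent LP-based scheme) solves this in $\cO(n^3)$ \cite{book-graphtheory-west2001introduction}. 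Adding the three contributions gives $\cO(k^2)+\cO(k^2)+\cO(n^3)=\cO(k^2+n^3)$, which is the stated bound.

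The proposition is essentially bookkeeping, so there is no real obstacle; the only step that deserves care — and the one I would expect a careful reader to probe — is the $\cO(k^2)$ figure for the eigenvector computation, since it absorbs the dependence of the iteration count on the spectral gap of $A$. I would either cite \cite{complexity-eigen} for the worst-case cost of standard eigensolvers on a dense $k\times k$ matrix, or simply phrase the bound as ``up to the cost of one leading-eigenvector computation on a $k\times k$ matrix.'' I would also note explicitly that when $\cR$ has additional structure (few admissible partners per node) both the $k^2$ term and the bipartite-matching term shrink, so $\cO(k^2+n^3)$ should be read as the unrestricted worst case.
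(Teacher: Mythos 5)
Your proposal is correct and follows essentially the same route as the paper, which simply aggregates the three per-step costs: $\cO(k^2)$ for forming the alignment matrix, $\cO(k^2)$ for the leading-eigenvector computation via power/QR methods \cite{complexity-eigen}, and $\cO(n^3)$ for the maximum weight bipartite matching via the Hungarian algorithm. Your added caveat about the iteration count's dependence on the spectral gap is a fair refinement of the same argument, not a different approach.
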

\begin{remark}\label{remark:greedy}
\textup{For large graphs, to reduce the overall computational complexity, the linear assignment optimization may be replaced by a greedy bipartite matching algorithm (e.g., \cite{preis1999linear}). In the greedy matching approach, at each step, the heaviest possible mapping is added to the current matching until no further mappings can be added. It is straightforward to show that this greedy algorithm finds a bipartite matching whose weight is at least half the optimum. The computational complexity of this greedy algorithm is $\cO(k\log(k)+nk)$.}
\end{remark}

If we only consider matches in the graph alignment optimization (i.e., $s_2=s_3=0$ in \eqref{eq:alignment-general-formulation}), the complexity of the eigen decomposition step can be reduced, since we need to compute top eigenvectors of sparse adjacency matrices. By considering mismatches, eigenvector computation should be performed over dense matrices, which require a higher computational complexity.

\subsection{Mean-field Optimality of EigenAlign Over Erd\H{o}s-R\'enyi Graphs}\label{subsec:erdos}
Here we analyze the performance of the EigenAlign algorithm over Erd\H{o}s-R\'enyi graphs, for both isomorphic and non-isomorphic cases, under two different noise models. While real graphs often have different structures than Erd\H{o}s-R\'enyi graphs, we consider this family of graphs in this section owing to their analytical tractability.

Suppose $G_1=(V_1,E_1)$ is an undirected Erd\H{o}s-R\'enyi graph with $n$ nodes where $Pr[G_1(i,j)=1]=p$ for $1\leq i,j\leq n$. Suppose $\tG$ is a noisy version of the graph $G_1$. We consider two different noise models in this section:

{\bf Noise Model I:} In this model we have,
\begin{align}\label{eq:noise-model1}
\tG_1\triangleq G_1 \odot (\mathbb{1}-Q)+ (\mathbb{1}-G_1) \odot Q,
\end{align}
where $\odot$ represents the Hadamard product, $\mathbb{1}$ is the matrix of all ones, and $Q$ is a binary symmetric random matrix whose edges are drawn i.i.d. from a Bernoulli distribution with $Pr[Q(i,j)=1]=p_e$. In words, the operation $G_1 \odot (1-Q)+ (1-G_1) \odot Q$ flips edges of $G_1$ uniformly randomly with probability $p_e$.

{\bf Noise Model II:} In this model we have,
\begin{align}\label{eq:noise-model2}
\tG_1\triangleq G_1 \odot (1-Q)+ (1-G_1) \odot Q',
\end{align}
where $Q$ and $Q'$ are binary symmetric random matrices whose edges are drawn i.i.d. from a Bernoulli distribution with $Pr[Q(i,j)=1]=p_e$ and $Pr[Q'(i,j)=1]=\pet$. Under this model, edges of $G_1$ flip uniformly randomly with probability $p_e$, while non-connecting tuples in $G_1$ will be connected in $\tG_1$ with probability $\pet$. Because $G_1$ is an Erd\H{o}s-R\'enyi graph with parameter $p$, choosing
\begin{align}\label{eq:def-q2}
\pet=\frac{p p_e}{1-p},
\end{align}
leads to having the expected density of graphs $G_1$ and $G_2$ be equal to $p$.

Using either model I \eqref{eq:noise-model1} or model II \eqref{eq:noise-model2} for $\tG_1$, we define $G_2$ as follows:
\begin{align}\label{eq:define-g2-erdos}
G_2 \triangleq P \tG_1 P^T,
\end{align}
where $P$ is a permutation matrix. Recall that $\cR$ is the set of possible mapping edges across graphs $G_1$ and $G_2$. Throughout this section, we assume that we are in the restricted graph alignment regime where $|\cR|=kn$ for $k>1$. The $n$ true mapping edges ($i\leftrightarrow i'$ if $P=I$) are included in $\cR$, while the remaining $(k-1)n$ mappings are selected uniformly randomly.

Let $S_{\text{true}}$ be the set of true mapping edges between $G_1$ and $G_2$, i.e., $S_{\text{true}}\triangleq \{(i,j): P(i,j)=1\}$. We define $S_{\text{false}}=\cR-S_{\text{true}}$ as the set of incorrect mapping edges between the two graphs. Moreover, we choose scores assigned to matches, neutrals and mismatches as $s_1=\alpha+\eps$, $s_2=1+\eps$ and $s_3=\eps$, respectively, where $\alpha>1$ and $0<\eps\ll 1$. These selections satisfy score conditions $s_1>s_2>s_3>0$ and lead to the regularization parameter $\gamma=1/(1+\alpha)$ in \eqref{eq:alignment-general-formulation-v2}.
\begin{theorem}\label{thm:erdos-noisy}
Let $A$ be the alignment graph between $G_1$ and $G_2$ as defined in \eqref{eq:alignment-net} with $s_1=\alpha+\eps$, $s_2=1+\eps$ and $s_3=\eps$. Let $\bv$ be the eigenvector of $\EE[A]$ corresponding to the largest eigenvalue, where the expectation is over realizations of $G_1$, $G_2$ and $\cR$. Then, under both noise models \eqref{eq:noise-model1} and \eqref{eq:noise-model2}, if $0<p<1/2$, and $0 \leq p_e<1/2$, as $n\to\infty$,
\begin{equation}
\bv(t_1)>\bv(t_2),\quad\forall t_1\in S_{\text{true}}~\text{and}~ \forall t_2\in S_{\text{false}}\nonumber.
\end{equation}
\end{theorem}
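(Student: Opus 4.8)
The plan is to evaluate $\EE[A]$ in closed form, recognize it as a rank--two positive matrix plus a perturbation of bounded operator norm, read off the leading eigenvector of the rank--two part by hand, and transfer the resulting entrywise ordering to $\bv$ by an eigenvector--perturbation estimate. Throughout I condition on the restricted set $\cR$; the computations below, and hence the conclusion, hold uniformly over all admissible $\cR\supseteq S_{\text{true}}$ with $|\cR|=kn$, so the statement for the average over $\cR$ follows as well. Write $N=kn$, let $\mathbf{1}\in\RR^N$ be the all--ones vector, and let $\mathbf{u}\in\{0,1\}^N$ be the indicator of $S_{\text{true}}$.

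First I would substitute $s_1=\alpha+\eps$, $s_2=1+\eps$, $s_3=\eps$ into \eqref{eq:alignment-net}, which gives $A[(i,j'),(r,s')]=(\alpha+1)G_1(i,r)G_2(j',s')-G_1(i,r)-G_2(j',s')+(1+\eps)$. Using $G_1\sim$ Erd\H{o}s--R\'enyi$(p)$ and $G_2(j',s')=\tG_1(P^{-1}(j'),P^{-1}(s'))$, one finds $\EE[G_1(i,r)]=p$ and $\EE[G_2(j',s')]=p_2$, where $p_2=p$ under Noise Model II and $p_2=p+p_e(1-2p)$ under Noise Model I (in both cases $p_2\in(0,1/2)$ when $0<p<1/2$, $0\le p_e<1/2$); moreover $G_1(i,r)$ and $G_2(j',s')$ are independent \emph{unless} $i\ne r$, $j'\ne s'$, and $\{P(i),P(r)\}=\{j',s'\}$ --- a \emph{collision} --- in which case $G_2(j',s')=\tG_1(i,r)$ and, since $G_1(i,r)\tG_1(i,r)=G_1(i,r)(1-Q(i,r))$ under either model, $\EE[G_1(i,r)\tG_1(i,r)]=p(1-p_e)$. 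The combinatorial key is then: a collision occurs for \emph{every} off--diagonal pair inside $S_{\text{true}}$, for \emph{at most one} partner per row inside $S_{\text{false}}$ (its unique ``swapped'' mate $(P^{-1}(j'),P(i))$, if that lies in $\cR$), and \emph{never} between $S_{\text{true}}$ and $S_{\text{false}}$. Hence
\[
\EE[A]=a\,\bbI_N+b\,\mathbf{u}\mathbf{u}^T+E,\qquad a=(\alpha+1)p\,p_2-p-p_2+(1+\eps),\qquad b=(\alpha+1)\bigl(p(1-p_e)-p\,p_2\bigr),
\]
where $\bbI_N$ is the $N\times N$ all--ones matrix and the symmetric remainder $E$ consists only of diagonal corrections and the sparse false--false collision entries (at most one per row), so $\|E\|_{\mathrm{op}}=O(1)$ as $n\to\infty$. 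A short check using $0<p,p_e<1/2$ shows $a>0$ and, crucially, $b>0$: under Model II, $b=(\alpha+1)p(1-p-p_e)>0$, and under Model I, $b=(\alpha+1)p(1-p)(1-2p_e)>0$.

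Next I would analyze $\bar A_0:=a\,\bbI_N+b\,\mathbf{u}\mathbf{u}^T$. It is positive--entried and positive semidefinite of rank two, so by the Perron--Frobenius theorem (Theorem \ref{thm:perron}) its top eigenvalue $\lambda_1$ is simple with a strictly positive eigenvector $\bv_0$. From $\lambda_1\bv_0=\bar A_0\bv_0=a(\mathbf{1}^T\bv_0)\mathbf{1}+b(\mathbf{u}^T\bv_0)\mathbf{u}$ and $\bv_0>0$ we get $\bv_0=c_1\mathbf{1}+c_2\mathbf{u}$ with $c_1,c_2>0$; thus $\bv_0$ equals $c_1+c_2$ on $S_{\text{true}}$ and $c_1$ on $S_{\text{false}}$, which already gives the qualitative ordering, and (normalizing $\|\bv_0\|_2=1$) the two levels differ by $c_2=\Theta(1/\sqrt n)$. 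Since $\lambda_1+\lambda_2=\mathrm{tr}\,\bar A_0=(ak+b)n$ and $\lambda_1\lambda_2=ab(k-1)n^2$, one has $\lambda_1-\lambda_2=n\sqrt{(ak-b)^2+4ab}=\Theta(n)$, while all other eigenvalues of $\bar A_0$ vanish. Now $\EE[A]$ is symmetric and positive--entried, so Perron--Frobenius gives it a simple top eigenvalue with positive eigenvector $\bv$; Weyl's inequality makes its spectral gap $\Theta(n)-O(1)=\Theta(n)$, so the Davis--Kahan theorem yields $\|\bv-\bv_0\|_2=O(\|E\|_{\mathrm{op}}/n)=O(1/n)$ (signs chosen so both are positive), whence $\|\bv-\bv_0\|_\infty=O(1/n)$. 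Therefore, for every $t_1\in S_{\text{true}}$ and $t_2\in S_{\text{false}}$,
\[
\bv(t_1)-\bv(t_2)\ \ge\ \bigl(\bv_0(t_1)-\bv_0(t_2)\bigr)-2\|\bv-\bv_0\|_\infty\ =\ \Theta(1/\sqrt n)-O(1/n)\ >\ 0
\]
for all large $n$, which is the claim.

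The main obstacle is the bookkeeping in the second paragraph: one must organize the expectation so that the ``signal'' term $b\,\mathbf{u}\mathbf{u}^T$ comes \emph{only} from true--true collisions, control the residual $E$ (its diagonal, and especially the genuinely random false--false collisions produced by the uniform choice of $\cR$) in operator norm, and --- most importantly --- verify $b>0$ simultaneously for both noise models purely from $p<1/2$ and $p_e<1/2$; this is exactly where those ranges are needed, since for $p_e\ge 1/2$ the match--correlation $p(1-p_e)$ drops to at most $p\,p_2$, the coefficient $b$ turns nonpositive, and the separation collapses. Once that decomposition is in hand, the spectral step is routine: because $\|E\|_{\mathrm{op}}=O(1)$ while both the eigenvalue gap $\Theta(n)$ and the level gap $\Theta(1/\sqrt n)$ are polynomially larger, the crude bound $\|\bv-\bv_0\|_\infty\le\|\bv-\bv_0\|_2$ already suffices and no entrywise eigenvector analysis is required.
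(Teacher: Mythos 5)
Your proposal is correct, and its first half is essentially the paper's computation: you evaluate the expected alignment matrix entrywise, find that true--true pairs carry a strictly larger expected score than pairs involving a false mapping, and check that this gap is positive exactly when $0<p<1/2$ and $0\le p_e<1/2$ for both noise models (your $b=(\alpha+1)p(1-p)(1-2p_e)$ under Model I and $(\alpha+1)p(1-p-p_e)$ under Model II match the paper's $a'-b'>0$ and $a''-b''>0$ up to the paper's $\eps$-level bookkeeping). Where you diverge is the eigenvector step. The paper treats $\EE[A]$ as \emph{exactly} a two-valued matrix (value $a'$ on off-diagonal true--true entries, $b'$ otherwise, $1+\eps$ on the diagonal) and then, as in Lemma \ref{lem:average}, solves the structured eigenproblem in closed form to get constant levels $v_1^*>v_2^*$ with an explicit ratio bound $v_1^*/v_2^*>1+\Delta$. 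You instead write $\EE[A]=a\,\bbI_N+b\,\mathbf{u}\mathbf{u}^T+E$, solve only the rank-two part by hand, and transfer the ordering via Weyl and Davis--Kahan, using the $\Theta(n)$ spectral gap against the $O(1)$ perturbation and the $\Theta(1/\sqrt n)$ level separation. This buys you two things the paper glosses over: the diagonal correction and, more substantively, the swapped-mate ``collisions'' among false--false pairs (whose edges are genuinely correlated, so the paper's formula \eqref{eq:alignment-noisy} is not exact for those few entries), as well as a clean way to handle the randomness of $\cR$ by conditioning and uniformity; the price is that you get only an asymptotic separation rather than the paper's explicit eigenvector-ratio constant. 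Your combinatorial claims about collisions (every off-diagonal true--true pair, at most one involutive mate per false row, never across true/false) and the resulting $\|E\|_{\mathrm{op}}=O(1)$ bound are correct, so the perturbative route goes through.
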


In noise models \eqref{eq:noise-model1} and \eqref{eq:noise-model2}, if we put $p_e=0$, then $G_2$ is isomorphic with $G_1$ because there exists a permutation matrix $P$ such that $G_2=P G_1 P^T$. For this case, we have the following Corollary:
\begin{corollary}\label{thm:erdos}
Let $G_1$ and $G_2$ be two isomorphic Erd\H{o}s-R\'enyi graphs with $n$ nodes such that $G_1=P G_2 P^T$, where $P$ is a permutation matrix. Under the conditions of Theorem \ref{thm:erdos-noisy}, as $n\to\infty$, $\bv(t_1)>\bv(t_2)$ where where $\bv$ is the top eigenvector of the expected alignment graph, $t_1$ is a true mapping edge and $t_2$ is a false mapping edge between the two graphs.
\end{corollary}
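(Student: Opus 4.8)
The corollary is exactly Theorem~\ref{thm:erdos-noisy} with $p_e=0$ (that value meets the hypothesis $0\le p_e<1/2$, and at $p_e=0$ the noise maps in \eqref{eq:noise-model1}--\eqref{eq:noise-model2} act as the identity, so $\tG_1=G_1$ and $G_2=PG_1P^{T}$ is genuinely isomorphic to $G_1$ as the corollary assumes), so it needs no separate argument; I describe instead how I would prove Theorem~\ref{thm:erdos-noisy}, noting the simplifications at $p_e=0$ along the way. The plan: pass to the expected alignment matrix $M:=\EE[A]$ restricted to the rows and columns indexed by $\cR$; show $M=M_0+E$ with $M_0$ of rank two and $\|E\|$ much smaller than the spectral gap of $M_0$; observe that the Perron eigenvector of $M_0$ is constant on $S_{\text{true}}$ and constant --- at a strictly smaller value --- on $S_{\text{false}}$; and carry this ordering to the top eigenvector $\bv$ of $M$ by eigenvector perturbation. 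Take $P=I$ without loss of generality, so the true mapping edges are the diagonal pairs $(i,i)$.

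Substituting $s_1=\alpha+\eps$, $s_2=1+\eps$, $s_3=\eps$ into \eqref{eq:alignment-net},
\begin{equation*}
A[(i,j'),(r,s')]=(\alpha+1)\,G_1(i,r)\,G_2(j',s')-\big(G_1(i,r)+G_2(j',s')\big)+(1+\eps),
\end{equation*}
so the only term needing care in $\EE[A]$ is $\EE[G_1(i,r)\,\tG_1(j',s')]$: it equals $p(1-p_e)$ exactly when $\{i,r\}$ and $\{j',s'\}$ are the same unordered pair (the same edge is then read in $G_1$ and in $\tG_1$) and equals $p\,\tilde p$ when the two entries are independent, where $\tilde p:=\EE[G_2(j',s')]$ is $p(1-p_e)+(1-p)p_e$ under \eqref{eq:noise-model1} and $p$ under \eqref{eq:noise-model2} (both $=p$ at $p_e=0$). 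Hence $M[(i,j'),(r,s')]$ equals $m_{\mathrm{tt}}$ when the two mapping edges are both true, equals $m_{\mathrm{tf}}$ when the four endpoints are in general position, and is an $\cO(1)$ ``correction'' value on the remaining positions --- the diagonal, pairs sharing a $G_1$- or a $G_2$-endpoint (where a vanishing self-loop kills the quadratic term), and the reversed pairs $(i,j'),(j',i)$. A short computation gives $m_{\mathrm{tt}}-m_{\mathrm{tf}}=(\alpha+1)p(1-p)(1-2p_e)$ under \eqref{eq:noise-model1} and $(\alpha+1)p(1-p-p_e)$ under \eqref{eq:noise-model2} --- both reducing to $(\alpha+1)p(1-p)$ at $p_e=0$ --- which are strictly positive precisely because $0<p<1/2$ and $0\le p_e<1/2$; the same hypotheses give $m_{\mathrm{tf}}>0$, hence in fact every entry of $M$ is positive.

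Now write $M=M_0+E$ with $M_0:=m_{\mathrm{tf}}\,\mathbf{1}\mathbf{1}^{T}+(m_{\mathrm{tt}}-m_{\mathrm{tf}})\,\mathbf{1}_{S_{\text{true}}}\mathbf{1}_{S_{\text{true}}}^{T}$ ($\mathbf 1$ the all-ones vector on $\cR$, $\mathbf 1_{S}$ the indicator of $S$), a matrix of rank two (here $k>1$, so $S_{\text{false}}\ne\emptyset$ and $\mathbf 1,\mathbf 1_{S_{\text{true}}}$ are linearly independent), and $E$ supported only on the ``correction'' positions. Every entry of $E$ is $\cO(1)$; grouping the shared-$G_1$-endpoint positions block-diagonally by their common vertex, and likewise for $G_2$, and using that with high probability over the random part of $\cR$ each vertex lies in $(k-1)+\cO(1)$ false mapping edges, every such block is $\cO(k)\times\cO(k)$ with $\cO(1)$ entries while the reversed-pairs contribution is $\cO(1)$ times a partial permutation matrix; so $\|E\|=\cO(k)$. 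For $M_0$ itself, the column space is $\mathrm{span}\{\mathbf 1_{S_{\text{true}}},\mathbf 1_{S_{\text{false}}}\}$, so its leading eigenvector has the form $\bv_0=c_t\,\mathbf 1_{S_{\text{true}}}+c_f\,\mathbf 1_{S_{\text{false}}}$ with $(c_t,c_f)$ the Perron eigenvector of the positive $2\times2$ matrix $\bigl(\begin{smallmatrix}m_{\mathrm{tt}}\,n & m_{\mathrm{tf}}(k-1)n\\ m_{\mathrm{tf}}\,n & m_{\mathrm{tf}}(k-1)n\end{smallmatrix}\bigr)$ that represents $M_0$ on this plane; hence $c_t,c_f>0$, and subtracting the two scalar eigen-equations gives the clean identity $(m_{\mathrm{tt}}-m_{\mathrm{tf}})\,n\,c_t=\lambda_1(M_0)\,(c_t-c_f)$, which forces $c_t>c_f$ since $m_{\mathrm{tt}}>m_{\mathrm{tf}}$ and $\lambda_1(M_0)>0$. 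Thus $\bv_0(t_1)>\bv_0(t_2)$ for every true $t_1$ and false $t_2$ already.

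It remains to pass from $\bv_0$ to the top eigenvector $\bv$ of $M$. Since $\lambda_1(M_0)=\Theta(kn)$ and the gap between $\lambda_1(M_0)$ and the rest of $\mathrm{spec}(M_0)$ is $\Omega(n)$, a Davis--Kahan / $\sin\theta$ estimate gives $\|\bv-\bv_0\|_2=\cO(\|E\|/n)=\cO(k/n)$, whereas after normalization the coordinate-wise separation in $\bv_0$ is $c_t-c_f=\Theta(n^{-1/2})$ for fixed $k$; since $\cO(k/n)\ll\Theta(n^{-1/2})$ as $n\to\infty$, the inequality $\bv(t_1)>\bv(t_2)$ is inherited by $\bv$ for all $t_1\in S_{\text{true}}$, $t_2\in S_{\text{false}}$, which is the assertion (positivity of $M$ with Theorem~\ref{thm:perron} makes $\bv$ simple, so the statement is unambiguous). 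I expect this last step to be the main obstacle: the useful ``signal'' built into $\bv_0$ is only of order $n^{-1/2}$ and shrinks further if $k$ is allowed to grow with $n$, so beyond the regime of bounded $k$ one must replace the spectral-norm perturbation bound by an entrywise ($\ell_\infty$) eigenvector-perturbation bound, and simultaneously control the randomness of $\cR$ --- the support of $E$ --- via concentration of the per-vertex mapping-edge counts.
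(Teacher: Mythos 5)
Your proposal is sound and reaches the right conclusion, but it takes a genuinely different route from the paper, in two respects. First, direction: the paper proves the corollary directly and then obtains Theorem \ref{thm:erdos-noisy} by rerunning the same computation with the modified constants $a',b'$ (resp. $a'',b''$), whereas you derive the corollary as the $p_e=0$ instance of the theorem; both directions are legitimate. Second, technique: the paper's proof (Lemma \ref{lem:average}) treats the expected alignment matrix as an exactly three-valued matrix --- $a$ on true--true off-diagonal entries, $b$ on every off-diagonal entry meeting a false mapping, $1+\eps$ on the diagonal --- writes out the eigen-equations, shows the top eigenvector is constant on each block, and solves a quadratic for $\lambda$ to obtain the closed-form ratio $v_1^*/v_2^*>1+\Delta$, with no perturbation theory at all. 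You instead decompose $\EE[A]=M_0+E$ with $M_0$ of rank two, get $c_t>c_f$ from a $2\times 2$ Perron argument (your reduction matrix and the identity $(m_{\mathrm{tt}}-m_{\mathrm{tf}})\,n\,c_t=\lambda_1(c_t-c_f)$ are correct, as are the differences $(\alpha+1)p(1-p)(1-2p_e)$ and $(\alpha+1)p(1-p-p_e)$, which match the sign analysis of the paper's $a'-b'$ and $a''-b''$), and then transfer the ordering to $\bv$ by a Davis--Kahan step. What each buys: the paper's exact diagonalization is elementary and yields an explicit quantitative eigenvector-ratio gap, but it silently treats $\EE[A]$ as three-valued, ignoring exactly the coincidence entries (mapping pairs sharing a $G_1$- or $G_2$-endpoint, reversed pairs, self-loop rows) that constitute your $E$; your route is heavier but makes those corrections explicit and controls them. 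One small repair to your bookkeeping: the number of false mapping edges through a fixed vertex is not $(k-1)+\cO(1)$ uniformly w.h.p. --- its maximum over the $n$ vertices grows roughly like $\log n/\log\log n$ for fixed $k$ --- but this only changes $\|E\|$ from $\cO(k)$ to $\cO(\log n)$, which still satisfies the bound $\|E\|=o(\sqrt{n})$ needed for your comparison $\cO(\|E\|/n)\ll\Theta(n^{-1/2})$; and since the theorem's regime is $|\cR|=kn$ with $k>1$ fixed, the growing-$k$ obstacle you flag at the end does not arise here.
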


We present proofs of Theorem \ref{thm:erdos-noisy} and Corollary \ref{thm:erdos} in Sections \ref{subsec:proof-thm-erdos} and \ref{subsec:proof-erdos-noisy}.

In the EigenAlign algorithm, we use values of the top eigenvector of the alignment graph in a maximum weight bipartite matching optimization to extract bijective mappings between the two graphs. Thus, if true mapping edges obtained higher eigenvector scores compared to the false one, the EigenAlign algorithm would infer optimal mappings between the two graphs. Theorem \ref{thm:erdos-noisy} indicates that, in an expectation sense, true mapping edges obtain larger eigenvector scores compared to the false ones when $|\cR|=kn$. In Section \ref{sec:eval} and through simulations, we show that the error of the EigenAlign algorithm is empirically small even in an unrestricted graph alignment setup.


\section{LowRankAlign Algorithm}\label{sec:lowrank}
In this section, we introduce a graph alignment algorithm that uses higher-order eigenvectors of (transformations of) adjacency graphs to align their structures. We refer to this extension as {\it LowRankAlign} (LRA). LRA can be useful specially in cases where leading eigenvectors of graphs are not informative. This case occurs for instance in the alignment of regular graph structures. Moreover, LRA does not require an explicit formation of the alignment graph which can be costly for large graphs if all mappings across graphs are possible.

Higher order eigenvectors have been used in other spectral inference problems such as graph clustering \cite{newman2006modularity,sussman,qin,athreya,saade} and the matrix coupling \cite{fraikin2008gradient,fraikin2008optimizing}. Moreover reference \cite{knossow2009inexact} has used higher order eigenvectors of the graph Laplacian to embed large graphs on a low-dimensional isometric space to compute an inexact matching. Our goal in this section is to provide a principled framework to exploit higher order eigenvectors in the graph alignment problem.

Here we assume graphs are symmetric. For simplicity we assume $n_1=n_2=n$. All discussions can be extended to the case where $n_1\neq n_2$.  Moreover, to simplify analysis, we assume singular values of matrices have multiplicity of one. Let $\sqcap$ be the set of all permutation matrices of size $n\times n$. Thus, the graph alignment optimization can be written as follows \footnote{To consider the generalized graph alignment formulation of \eqref{eq:alignment-compact-formulation}, one can replace $G_1$ and $G_2$ with $G_1-\gamma\bbI$ and $G_2-\gamma\bbI$ in \eqref{eq:qap-overlap-perm}, respectively.}:
\begin{align}\label{eq:qap-overlap-perm}
\max &\quad Tr (G_1 X G_2 X^T),\\
& X \in \sqcap. \nonumber
\end{align}
Let $X^*$ be an optimal solution of optimization \eqref{eq:qap-overlap-perm}. Finding an optimal solution of this optimization is known to be NP-hard \cite{QAP-newlp-NPhard}. If $X\in\sqcap$, we have
\begin{align}
Tr (G_1 X G_2 X^T)= Tr \big((G_1+\delta_1 I) X (G_2+\delta_2 I) X^T\big)+\text{constant}.
\end{align}
In other words we can add and subtract multiples of identity to make the resulting symmetric matrices positive definite, without changing the structure of the problem. Thus, without loss of generality, we assume that matrices $G_1$ and $G_2$ are positive semi-definite.

We compute a solution for Optimization \eqref{eq:qap-overlap-perm} in two steps:

(i) {\bfseries The Relaxation Step:} First, we compute a solution $X_0$ to a relaxation of Optimization \eqref{eq:qap-overlap-perm} over orthogonal matrices. Other relaxations can be considered as well. $X_0$ may not be a valid permutation matrix.

(ii) {\bfseries The Rounding Step:} We propose a rounding step using projection in the direction of eigenvectors of (transformations of) adjacency graphs scaled by their corresponding eigenvalues.

Below we explain these steps with more details:

{\bfseries The Relaxation step:} Let $\Gamma$ be a set that contains all permutation matrices (i.e., $\sqcap\subseteq\Gamma$). An example of $\Gamma$ is the set of orthogonal matrices. Let $X_{0}$ be a solution of the following optimization:
\begin{align}\label{eq:qap-overlap-relaxation}
\max &\quad Tr (G_1 X G_2 X^T),\\
& X \in \Gamma. \nonumber
\end{align}

If $\Gamma$ is assumed to be the set of orthogonal matrices (i.e., $\Gamma=\cO$), an optimal solution of optimization \eqref{eq:qap-overlap-relaxation} can be found using eigen decomposition of matrices $G_1$ and $G_2$ as follows:
\begin{theorem}\label{thm:orth-optimal}
Suppose $v_i$ and $u_i$ are eigenvectors of symmetric matrices $G_1$ and $G_2$, respectively. Let $V$ and $U$ be eigenvector matrices whose $i$-th columns are $v_i$ and $u_i$, respectively. Then,
\begin{align}
X_0=VU^{T}=\sum_{i=1}^{n} v_i u_i^T,
\end{align}
is an optimal solution of optimization \eqref{eq:qap-overlap-relaxation} over orthogonal matrices (i.e., $\Gamma=\cO$).
\end{theorem}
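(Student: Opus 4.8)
The plan is to reduce optimization \eqref{eq:qap-overlap-relaxation} over orthogonal $X$ to a classical eigenvalue-matching problem via a change of variables using the eigendecompositions $G_1 = V\Lambda V^T$ and $G_2 = U\Sigma U^T$, where $\Lambda=\mathrm{diag}(\lambda_1,\dots,\lambda_n)$ and $\Sigma=\mathrm{diag}(\sigma_1,\dots,\sigma_n)$ collect the (real) eigenvalues of the two symmetric matrices. First I would substitute into the objective and use the cyclic property of the trace: $Tr(G_1 X G_2 X^T) = Tr(V\Lambda V^T X U\Sigma U^T X^T) = Tr(\Lambda (V^T X U) \Sigma (V^T X U)^T)$. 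Setting $Z := V^T X U$, and noting that $Z$ ranges exactly over all orthogonal matrices as $X$ does (since $V,U$ are orthogonal), the relaxed problem becomes $\max_{Z \text{ orthogonal}} Tr(\Lambda Z \Sigma Z^T)$.

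Next I would expand this reduced objective entrywise: $Tr(\Lambda Z \Sigma Z^T) = \sum_{i,j} \lambda_i \sigma_j Z_{ij}^2$. Since $Z$ is orthogonal, the matrix $B$ with entries $B_{ij} = Z_{ij}^2$ is doubly stochastic, so the objective equals $\sum_{i,j} \lambda_i \sigma_j B_{ij}$ with $B$ ranging over (a subset of) the Birkhoff polytope. Maximizing a bilinear form $\sum_{i,j}\lambda_i\sigma_j B_{ij}$ over doubly stochastic $B$ is attained at a permutation, and by a standard rearrangement / Hardy–Littlewood–Pólya argument the optimal permutation sorts both sequences in the same order; assuming (as the paper does, after adding multiples of the identity) that both $\Lambda$ and $\Sigma$ are already ordered with $\lambda_1\ge\cdots\ge\lambda_n$ and $\sigma_1\ge\cdots\ge\sigma_n$, the optimum of the relaxed problem is $\sum_i \lambda_i\sigma_i$, achieved by $B = I$, i.e.\ by $Z = I$. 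Translating back through $Z = V^T X U = I$ gives $X_0 = V U^T = \sum_{i=1}^n v_i u_i^T$, which is orthogonal and attains the bound, hence is optimal.

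The main obstacle — and the step requiring the most care — is the tightness claim, i.e.\ showing that the maximum over orthogonal $Z$ of $\sum_{i,j}\lambda_i\sigma_j Z_{ij}^2$ is not merely bounded by but \emph{equals} $\sum_i \lambda_i\sigma_i$, and that it is attained at $Z=I$. Here one must use positive semidefiniteness of $G_1,G_2$ (so that all $\lambda_i,\sigma_j\ge 0$, which the paper has arranged): the upper bound $\sum_{i,j}\lambda_i\sigma_j B_{ij}\le \sum_i\lambda_i\sigma_i$ over doubly stochastic $B$ follows from the fact that the vector of row-sums weighted appropriately is majorized correctly, or more directly from von Neumann's trace inequality / the Birkhoff–von Neumann theorem applied to the linear functional. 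A clean route is: the map $Z\mapsto (Z_{ij}^2)_{ij}$ sends $O(n)$ into the Birkhoff polytope, a linear functional over that polytope is maximized at a vertex (a permutation matrix), and among permutation matrices $\sum_i \lambda_i\sigma_{\pi(i)}$ is maximized by the identity permutation when both spectra are co-sorted (rearrangement inequality, using $\lambda_i,\sigma_j\ge 0$ is not even needed for this last step, only for interpreting the relaxation sensibly). I would also remark that multiplicity-one of the eigenvalues, assumed in the text, ensures $X_0$ is the \emph{unique} optimizer up to the usual sign ambiguities in choosing eigenvectors, though uniqueness is not required for the statement as written.
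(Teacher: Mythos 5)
Your proposal is correct, and it is essentially the argument the paper relies on: the paper offers no proof of its own but defers to Section 6.1 of the cited Finke--Burkard--Rendl reference, where the same steps appear --- the change of variables $Z=V^TXU$ giving $Tr(G_1XG_2X^T)=Tr(\Lambda Z\Sigma Z^T)=\sum_{i,j}\lambda_i\sigma_j Z_{ij}^2$, the observation that $(Z_{ij}^2)$ is doubly stochastic, and the Birkhoff/rearrangement step yielding the bound $\sum_i\lambda_i(G_1)\lambda_i(G_2)$ attained at $Z=I$, i.e.\ $X_0=VU^T$. Your side remarks are also accurate: the eigenvectors must indeed be ordered so that the two spectra are co-sorted (a hypothesis the theorem statement leaves implicit), and the sign ambiguity you note in the optimizers is exactly what the paper encodes in the set $\cX_0$ of \eqref{eq:orth-relax-all-solutions}.
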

\begin{proof}
See Section 6.1 of reference \cite{orthogonal-finke1987quadratic}.
\end{proof}

Theorem \ref{thm:orth-optimal} characterizes an optimal solution of the orthogonal relaxation of the graph alignment optimization. A similar argument can be constructed for eigenvectors of the matrix $G_2$.  Let
\begin{align}\label{eq:orth-relax-all-solutions}
\cX_0\triangleq \big\{X_0: X_0=\sum_{i=1}^{n} s_i v_i u_i^T, \bs\in \{-1,1\}^{n}\big\},
\end{align}
where $s_i$ is the $i$-th component of the vector $\bs$. The set $\cX_0$ represents multiple optimal solutions of optimization \eqref{eq:qap-overlap-relaxation} when $\Gamma=\cO$. It is because if $\bv$ is an eigenvector of a matrix corresponding to the eigenvalue $\lambda$, $-\bv$ is also an eigenvector of the same matrix with the same eigenvalue. $\cX_0$ can have at most $2^n$ distinct members.

{\bfseries The Rounding step:} $X_0$ may not be a valid permutation matrix. One way to find a permutation matrix using $X_0$ is to project $X_0$ over the space of permutation matrices $\sqcap$:
\begin{align}\label{eq:proj}
\max &\quad Tr (X X_0^T),\\
& X \in \sqcap.\nonumber
\end{align}
However, it has been shown that an optimal solution of optimization \eqref{eq:proj} has a poor performance in practice \cite{orth-qap-alg}. In the following, we propose an alternative algorithm to compute a permutation matrix using $X_0$ with a certain performance guarantee. Consider the following optimization:
\begin{align}\label{eq:qap-overlap-linear}
\max &\quad Tr (G_1 X_0 G_2 X^T),\\
& X \in \sqcap, \nonumber\\
& X_0\in \cX_0.\nonumber
\end{align}
For a fixed $X_0$, this is a maximum weight bipartite matching optimization which can be solved exactly using linear programming. Let $X_{lin}^*$ be an optimal solution of optimization \eqref{eq:qap-overlap-linear}. Define
\begin{align}\label{eq:def-f-fbar}
f(X) & \triangleq Tr (G_1 X G_2 X^T),\\
\tf(X) & \triangleq Tr (G_1 X_0 G_2 X_0^T)+2Tr (G_1 X_0 G_2 (X-X_0)^T).\nonumber
\end{align}
\begin{theorem}\label{thm:gaurantee}
Let $X^*$ and $X_{lin}^*$ be optimal solutions of optimizations \eqref{eq:qap-overlap-perm} and \eqref{eq:qap-overlap-linear}, respectively. We have,
\begin{align}\label{eq:gaurantee}
|f(X^*)-\tf(X_{lin}^*)|\leq \epsilon^2 \sum_{i=1}^n \sigma_{i}(G_1) \sigma_{i}(G_2),
\end{align}
where $\sigma_{i}(G_a)$ represents the $i$-th largest singular value of matrix $G_a$, for $a=1,2$, and $\epsilon$ is a bound on the relaxation gap (i.e., $\min_{X_0\in\cX_0}~~\|X^*-X_0\|_{op}\leq \epsilon$). Note that $\|.\|_{op}$ indicates the matrix operator norm.
\end{theorem}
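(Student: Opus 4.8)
The plan is to recognize $\tf$ as the exact ``linearization'' of $f$ about the orthogonal-relaxation solution $X_0$, to observe that the quadratic remainder of this linearization is nonnegative (because $G_1,G_2$ may be taken positive semidefinite) and small (because $X^*$ is close to $X_0$ in operator norm), and then to sandwich $\tf(X_{lin}^*)$ between $f(X^*)$ and $f(X^*)-(\text{remainder})$ using the optimality of $X^*$ for \eqref{eq:qap-overlap-perm} and of $X_{lin}^*$ for the linearized problem. Concretely, I would first fix any $X_0\in\cX_0$, set $\Delta=X-X_0$, expand $f(X)=Tr\big(G_1(X_0+\Delta)G_2(X_0+\Delta)^T\big)$, and use the symmetry of $G_1,G_2$ with the cyclic invariance of the trace to see that the two cross terms coincide. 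This yields the exact identity
\begin{align}\label{eq:plan-identity}
f(X)=\tf(X)+Tr\big(G_1(X-X_0)G_2(X-X_0)^T\big),
\end{align}
valid for every $X$. Since the right-hand side shows $\tf$ is affine in $X$, maximizing $\tf$ over $\sqcap$ for a fixed $X_0$ is exactly the linear assignment problem \eqref{eq:qap-overlap-linear}; hence I take $\tf$ and $X_{lin}^*$ with respect to one fixed $X_0\in\cX_0$, and $X_{lin}^*$ is then a maximizer of $\tf$ over $\sqcap$. Writing $G_2=CC^T$ (legitimate since we assumed $G_2\succeq 0$), the remainder in \eqref{eq:plan-identity} equals $Tr\big((\Delta C)^TG_1(\Delta C)\big)\ge 0$ because $G_1\succeq 0$, so $\tf(X)\le f(X)$ for all $X$.

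Next I would take $X_0\in\cX_0$ to be the member attaining the relaxation gap, so $\|X^*-X_0\|_{op}\le\epsilon$, set $\Delta=X^*-X_0$, and prove the key two-sided estimate
\begin{align}\label{eq:plan-key}
0\le Tr\big(G_1\Delta G_2\Delta^T\big)\le\epsilon^2\sum_{i=1}^n\sigma_i(G_1)\sigma_i(G_2).
\end{align}
The left inequality is the positive-semidefiniteness argument above. For the right one, note $\Delta G_2\Delta^T=(\Delta C)(\Delta C)^T\succeq 0$ with $\sigma_i(\Delta G_2\Delta^T)=\sigma_i(\Delta C)^2\le\|\Delta\|_{op}^2\,\sigma_i(C)^2=\|\Delta\|_{op}^2\,\sigma_i(G_2)\le\epsilon^2\sigma_i(G_2)$, where we used the submultiplicativity $\sigma_i(AB)\le\|A\|_{op}\,\sigma_i(B)$ and $\sigma_i(CC^T)=\sigma_i(C)^2$; then $Tr\big(G_1\,(\Delta G_2\Delta^T)\big)$ is the trace of a product of two PSD matrices, so the von Neumann / Ruhe trace inequality bounds it by $\sum_i\sigma_i(G_1)\,\sigma_i(\Delta G_2\Delta^T)\le\epsilon^2\sum_i\sigma_i(G_1)\sigma_i(G_2)$. (Equivalently, diagonalizing $G_1=VD_1V^T$, $G_2=UD_2U^T$ rewrites the middle term of \eqref{eq:plan-key} as $\sum_{i,j}\sigma_i(G_1)\sigma_j(G_2)M_{ij}^2$ with $M=V^T\Delta U$ and $\|M\|_{op}=\|\Delta\|_{op}\le\epsilon$; since $M_{ij}^2/\epsilon^2$ then has row and column sums at most $1$, the bound follows from Birkhoff's theorem and the rearrangement inequality.)

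It then remains to assemble the sandwich. Using \eqref{eq:plan-identity} at $X=X^*$ and \eqref{eq:plan-key} gives $\tf(X^*)\ge f(X^*)-\epsilon^2\sum_i\sigma_i(G_1)\sigma_i(G_2)$, and since $X_{lin}^*$ maximizes $\tf$ over $\sqcap$ while $X^*\in\sqcap$, also $\tf(X_{lin}^*)\ge\tf(X^*)$; combining, $\tf(X_{lin}^*)\ge f(X^*)-\epsilon^2\sum_i\sigma_i(G_1)\sigma_i(G_2)$. In the other direction, $\tf(X_{lin}^*)\le f(X_{lin}^*)\le f(X^*)$, the first inequality from $\tf\le f$ and the second from the optimality of $X^*$ for \eqref{eq:qap-overlap-perm}. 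Hence $0\le f(X^*)-\tf(X_{lin}^*)\le\epsilon^2\sum_i\sigma_i(G_1)\sigma_i(G_2)$, which is \eqref{eq:gaurantee}.

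The main obstacle is the upper bound in \eqref{eq:plan-key}: obtaining the \emph{matched} sum $\sum_i\sigma_i(G_1)\sigma_i(G_2)$ rather than a lossy estimate such as $n\,\epsilon^2\|G_1\|_{op}\|G_2\|_{op}$. This is precisely where one must use both that $G_1,G_2$ are positive semidefinite (so that a trace inequality for PSD matrices, or Birkhoff plus rearrangement after simultaneous diagonalization of the two factors, applies) and that conjugating $\Delta$ by the orthogonal eigenvector matrices preserves its operator norm, so the perturbation remains of size $\epsilon$ in the coordinate frame in which $G_1$ and $G_2$ are diagonal.
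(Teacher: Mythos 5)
Your proposal is correct and follows essentially the same route as the paper's proof: the exact second-order expansion $f(X)=\tf(X)+Tr\big(G_1(X-X_0)G_2(X-X_0)^T\big)$, a von Neumann trace bound combined with singular-value product inequalities to show the remainder is at most $\epsilon^2\sum_{i=1}^n\sigma_i(G_1)\sigma_i(G_2)$, and the optimality sandwich between $f(X^*)$ and $\tf(X_{lin}^*)$. The only deviations are minor: you apply von Neumann to the pair $\big(G_1,\,\Delta G_2\Delta^T\big)$ whereas the paper groups the factors as $\big(G_1\Delta,\,G_2\Delta^T\big)$ and bounds $\sigma_i(G_a\Delta)\leq\epsilon\,\sigma_i(G_a)$, and your use of the positive semidefinite normalization to get $\tf\leq f$ pointwise makes the lower half of the sandwich slightly cleaner than the paper's step $f(X_{lin}^*)\geq\tf(X_{lin}^*)-\epsilon^2\sum_i\sigma_i(G_1)\sigma_i(G_2)$, which tacitly applies the remainder bound at $X_{lin}^*$ where no operator-norm bound on $X_{lin}^*-X_0$ is assumed.
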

\begin{proof}
See Section \ref{subsec:proof-thm-gaurantee}.
\end{proof}

Optimization \eqref{eq:qap-overlap-linear} can be simplified to the following optimization which finds a valid permutation matrix using the orthogonal relaxation of the graph alignment optimization:
\begin{align}\label{eq:qap-overlap-linear-orth}
\max &\quad Tr \big((\sum_{i=1}^{n} \lambda_{i}(G_1)\lambda_{i}(G_2) s_i v_i u_i^T) X^T\big),\\
& X \in \sqcap, \nonumber\\
& \bs\in \{-1,1\}^{n},\nonumber
\end{align}
where $\lambda_i(G_a)$ is the $i$-th largest eigenvalue of $G_a$ for $a=1,2$. The objective function of optimization \eqref{eq:qap-overlap-linear-orth} simplifies the graph alignment problem to the simultaneous alignment of eigenvectors whose contributions in the overall alignment score are weighed by their corresponding eigenvalues. However, there are possibly exponentially many optimal solutions for optimization \eqref{eq:qap-overlap-linear-orth} and obtaining their resulting permutation matrices would be computationally infeasible. Because contributions of eigenvectors with small eigenvalues to the objective function of optimization \eqref{eq:qap-overlap-linear-orth} are small, one can instead as a heuristic, presumably solve the following optimization based on the low rank approximation of the objective function:
\begin{alg}[LowRankAlign Algorithm]\label{alg:LRA}
The following optimization summarizes the LRA algorithm:
\begin{align}\label{eq:qap-overlap-linear-orth-low-rank}
\max \quad & Tr \big((\sum_{i=1}^{k} s_i \lambda_{i}(G_1) \lambda_{i}(G_2) v_i u_i^T) X^T\big),\\
& X \in \sqcap, \nonumber\\
& s_i \in \{-1,1\}, \quad \forall 1\leq i\leq k. \nonumber
\end{align}
\noindent
where $k$ is a constant that determines the rank of the affinity matrix.
\end{alg}

In the restricted graph alignment setup, some mapping edges across two graphs may not be allowed. In that case, one can set the affinity weights (i.e., weights used in the maximum weight bipartite matching step) of such pairs in optimization \eqref{eq:qap-overlap-linear-orth-low-rank} to be $-\infty$.

\section{Performance Evaluation Over Synthetic Graphs}\label{sec:eval}

\begin{figure*}[t]
  \centering
      \includegraphics[width=14cm]{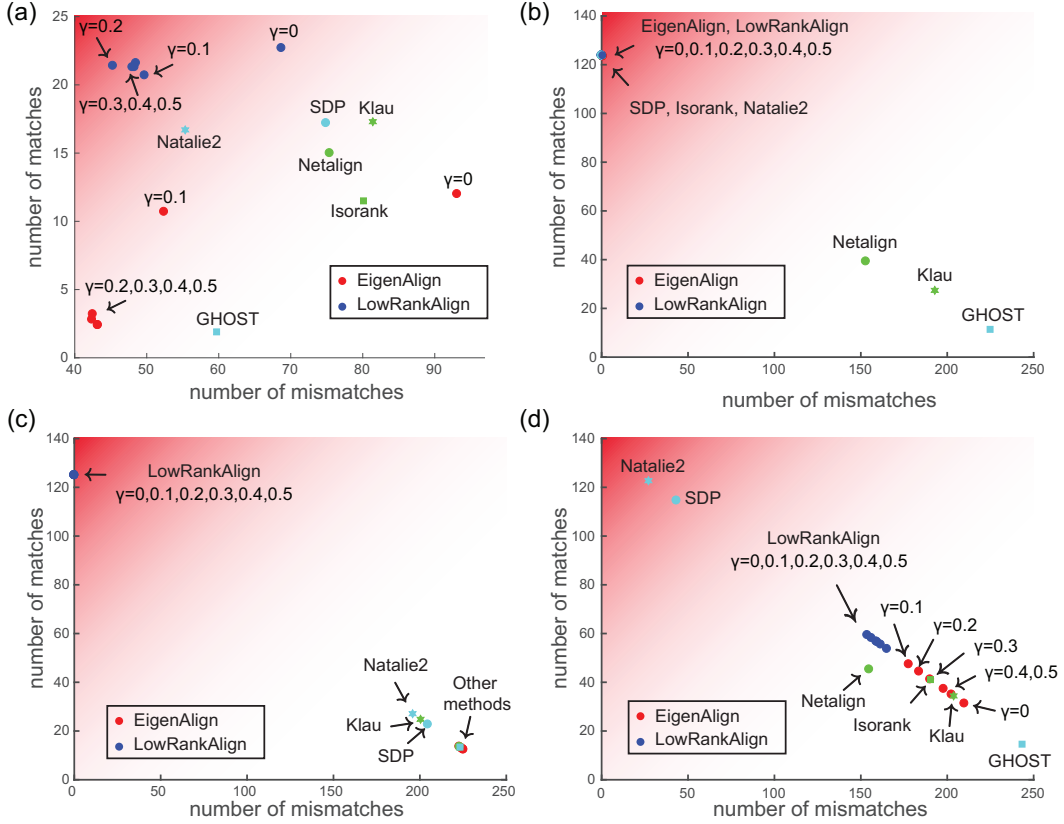}
  \caption{Performance evaluation of different graph alignment methods over (a) stochastic block models, (b) isomorphic Erd\H{o}s-R\'enyi graphs, (c) isomorphic random regular graphs, and (d) noisy power law graphs. Experiments have been repeated 10 times in each case. For each method the average number of matches and mismatches have been shown. The high-match low-mismatch area has been highlighted by red shades.}
  \label{fig:sim-res-all}
\end{figure*}

Here we compare the performance of the proposed graph alignment algorithms (LRA and EA) against some other graph alignment methods including Natalie2 \cite{natalie2,natalie}, GHOST \cite{patro2012global}, IsoRank \cite{isorank}, NetAlign \cite{bayati2013message}, Klau's approach \cite{klau2009new} as well as an SDP-based method \cite{splitting-sdp} through simulations. Natalie2 and Klau's approach use Lagrange multipliers to relax the underlying quadratic assignment problem. IsoRank is a global graph alignment method that uses an iterative approach to align nodes across two graphs based on their neighborhood similarities, while GHOST uses a heuristic multi-scale spectral signature of graphs to compute an alignment across them. NetAlign formulates the alignment problem in a quadratic optimization framework and uses message passing to approximately solve it.  The SDP-based method \cite{splitting-sdp} uses a convex relaxation of the underlying QAP based on matrix splitting. In our simulations, we use default parameters of these methods.

We report the performance of proposed EigenAlign (EA) and LowRankAlign (LRA) Algorithms for $\gamma\in\{0,0.1,0.2,0.3,0.4,0.5-\epsilon\}$ where $\eps=0.001$. In general, this parameter can be tuned in different applications using standard machine learning techniques such as cross validations \cite{cross-validation}. For LRA we use top $k=3$ eigenvectors of input graphs as larger values of $k$ did not have a significant effect on the results. We consider four different setups:
\begin{itemize}
  \item[-] $G_1$ is an Erd\H{o}s-R\'enyi graph with $n_1=25$ nodes and the density parameter $0.1$. $G_2$ is a stochastic block model with two blocks each with 25 nodes (i.e., $n_2=50$). Edge densities within blocks are 0.1 and 0.3, and the edge density across blocks is 0.05.
  \item[-] $G_1$ and $G_2$ are isomorphic Erd\H{o}s-R\'enyi graphs with $n_1=n_2=50$ nodes with an edge density $0.1$.
  \item[-] $G_1$ and $G_2$ are isomorphic random regular graphs with $n_1=n_2=50$ nodes whose edge density parameters are $0.1$.
  \item[-] $G_1$ is a power law graph \cite{power-law2001random} constructed as follows: we start with a random subgraph with 5 nodes. At each iteration, a node is added to the graph connecting to three existing nodes with probabilities proportional to their degrees. This process is repeated till the number of nodes in the graph is equal to $n_1=50$. Then we construct $G_2$ according to the noise model \eqref{eq:noise-model2} with $p_e=0.05$. In \eqref{eq:noise-model2} we use the density of $G_1$ as parameter $p$.
\end{itemize}

Figure \ref{fig:sim-res-all} shows the number of matches and mismatches caused by different graph alignment methods in four considered setups for an unrestricted graph alignment problem. In the stochastic block model case (panel a) LRA outperforms other methods in terms of resulting in large number of matches and few mismatches. Since LRA with $\gamma=0$ ignores the effect of mismatches, it results in a slightly larger number of matches compared to the case with $\gamma\neq 0$. At the same time LRA with $\gamma=0$ results in a larger number of mismatches compared to the case with $\gamma\neq 0$. This highlights the effect of considering mismatches in the generalized graph alignment formulation \eqref{eq:alignment-general-formulation-v2} when graphs have different sizes and heterogenous edge densities.

Over isomorphic Erd\H{o}s-R\'enyi graphs (panel b), EA, LRA, Isorank, SDP and Natalie2 have the best performance of achieving the highest number of matches and zero mismatches. Netalign, Klau and GHOST have poor performances in this case. Note that some of these methods are designed for very sparse graphs and for the restricted graph alignment setup. This may partially explain the poor performance of these methods.

Over isomorphic random graphs (panel c) LRA outperforms other methods achieving the highest number of matches and zero mismatches. The performance of LRA is also robust against parameter $\gamma$. Note that the alignment of regular graph structures is one of the most difficult graph alignment cases because of homogeneity of node degrees. The fact that LRA performs well in this case while all other methods have poor performance illustrates the effectiveness of using higher order eigenvectors in aligning homogenous graph structures. Finally, over noisy power law graphs (panel d) Natalie2 and SDP outperform other methods. The performance of LRA in this case is higher than other methods except Natalie2 and SDP.

\begin{figure*}[t]
  \centering
      \includegraphics[width=14cm]{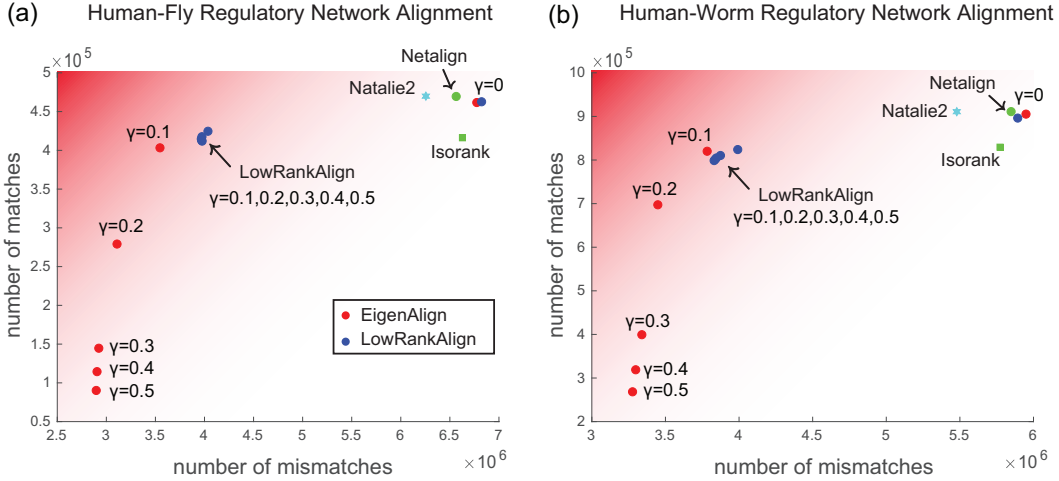}
  \caption{Performance evaluation of different graph alignment methods over (a) human-fly and (b) human-worm gene regulatory graphs. The high-match low-mismatch area has been highlighted by red shades.}
  \label{fig:real-data}
\end{figure*}

\section{Performance Evaluation Over Gene Networks}\label{sec:regulatory}
Here we apply graph alignment methods to compare gene regulatory graphs across human, fly and worm species. Comparative graph analysis in evolutionary studies often requires having a one-to-one mapping across genes of two or multiple networks. However, since human, fly and worm are distant species and as a result, many gene families have undergone extensive duplications and losses, we observe non-bijective homolog mappings across their genes \cite{jessica-paper}. For example, one gene in human can be homologous to multiple genes in fly and vice versa. To infer bijective mappings as a subset of homolog genes across species, we use graph alignment methods. We use regulatory networks that are inferred by integrating genome-wide functional and physical genomics datasets from ENCODE and modENCODE consortia (see the Appendix for more details).

Similarly to our discussion in Section \ref{sec:eval} we report the performance of proposed EigenAlign (EA) and LowRankAlign (LRA) methods for $\gamma\in\{0,0.1,0.2,0.3,0.4,0.5-\epsilon\}$ where $\eps=0.001$. For LRA we use top $k=2$ eigenvectors of input graphs. We also assess the performance of NetAlign, IsoRank, and Natalie2 in our real data analysis. We exclude Klau's approach \cite{klau2009new} and the SDP-based method of \cite{splitting-sdp} from our analysis in this section owing to their high memory and computational complexity. Moreover the GHOST method failed to run over these networks owing to some implementation errors.

Figure \ref{fig:real-data} shows the number of matches and mismatches caused by different graph alignment methods across human-fly and human-worm networks. In both cases EA and LRA with $\gamma=0$ (i.e., ignoring mismatches) have a comparable performance to other methods. However, by changing $\gamma$ we observe a trade-off between number of caused matches and mismatches. For example, in the human-fly network alignment case LRA with a non-zero $\gamma$ results in approximately 2-fold decrease in the number of mismatches while the number of caused matches decreases by approximately $10\%$. This highlights the effect of considering mismatches in the graph alignment optimization. To substantiate these inferences, further experiments should be performed to determine the involvement of inferred conserved gene interactions in different biological processes, which is beyond the scope of the present paper.

\section{Conclusion}\label{sec:conc}
In this paper, we made two main contributions to the field of graph alignment. Firstly, we proposed a generalized graph alignment formulation that considers both matches and mismatches in a standard QAP formulation. We showed that this can be critical in applications where graphs have different sizes and heterogenous edge densities. Secondly, we proposed two graph alignment algorithms which employ spectral decompositions of functions of adjacency graphs followed by a maximum weight bipartite matching optimization. One of our proposed methods simplifies the graph alignment optimization to simultaneous alignment of eigenvectors of (transformations of) adjacency graphs scaled by corresponding eigenvalues. We demonstrated effectiveness of the proposed methods theoretically for certain classes of graphs and over various synthetic and real graph models.

\section{Code}
We provide code for the proposed method in the following link:
\textcolor{blue}{https://github.com/SoheilFeizi/spectral-graph-alignment}

\section{Proofs}\label{sec:proofs}
In this section, we present proofs of the main results of the paper.
\subsection{Proofs of Lemmas \ref{lem:iso-align-relation1} and \ref{lem:iso-align-relation2}}\label{subsec:proof-lem:iso-align-relation2}
First we prove Lemma \ref{lem:iso-align-relation1}. Let $A$ be the alignment graph of $G_1$ and $G_2$. By a slight abuse of notation, we use $A$ as the adjacency matrix of the alignment graph as well. Suppose $\tP$ is a permutation matrix where $\rho\triangleq \frac{1}{2n}\|P-\tP\|>0$. Let $\by$ and $\bty$ be vectorized versions of permutation matrices $P$ and $\tP$, respectively. Then, we have,
\begin{align}
&\frac{1}{n^2} \EE[\bty_2^T A \bty_2] = (1-\rho) \big[p s_1+ (1-p) s_2\big]\\
&+ \rho \big[p^2 s_1+ (1-p)^2 s_2+ 2p(1-p)s_3\big]\nonumber\\
&< (1-\rho) \big[p s_1+ (1-p) s_2\big]+ \rho \big[p^2 s_1+ (1-p)^2 s_2+ p(1-p) (s_1+s_2)\big]\nonumber\\
&= (1-\rho) \big[p s_1+ (1-p) s_2\big]+ \rho \big[p s_1+ (1-p) s_2\big]\nonumber\\
&= p s_1+ (1-p) s_2\nonumber\\
&= \frac{1}{n^2} \EE[\by_1^T A \by_1].\nonumber
\end{align}

Now we prove Lemma \ref{lem:iso-align-relation2}: Similarly to the proof of Lemma \ref{lem:iso-align-relation1}, let $A$ be the alignment graph of $G_1$ and $G_2$ and suppose $\tP$ is a permutation matrix where $\rho\triangleq \frac{1}{2n}\|P-\tP\|>0$. Let $\by$ and $\bty$ be vectorized versions of permutation matrices $P$ and $\tP$, respectively. Define $a'$ and $b'$ as follows:
\begin{align}\label{eq:a-b-noisy-gen}
a' \triangleq& p(1-q)s_1+(1-p)(1-q) s_2+ (pq+(1-p)q)s_3, \\
b' \triangleq& \big(p^2(1-q)+p q(1-p)\big)s_1\nonumber\\
+& \big((1-p)^2(1-q)+p q (1-p)\big) s_2\nonumber\\
+& \big( 2 p (1-p) (1-q)+ 2 p^2 q \big) s_3.\nonumber
\end{align}
Thus,
\begin{align}\label{eq:a'-b'-diff}
a'-b'=p(1-p)(1-2q)(s_1+s_2-2s_3)+q(1-2p)s_3.
\end{align}
Because $s_1>s_2>s_3$, we have, $s_1+s_2-2s_3>0$. Because $0<p<1/2$ and $0\leq q<1/2$, we have $(1-2p)>0$ and $(1-2q)>0$. Therefore, according to \eqref{eq:a'-b'-diff}, $a'>b'$. Thus we have,
\begin{align}
\frac{1}{n^2} \EE[\bty^T A \bty] = (1-\rho) a'+ \rho b' < a'= \frac{1}{n^2} \EE[\by^T A \by].\nonumber
\end{align}
\subsection{Proof Of Corollary \ref{thm:erdos}}\label{subsec:proof-thm-erdos}
Without loss of generality and to simplify notations, we assume the permutation matrix $P$ is equal to the identity matrix $I$, i.e., the isomorphic mapping across $G_1$ and $G_2$ is $\{1\leftrightarrow 1', 2\leftrightarrow 2', \ldots, n\leftrightarrow n'\}$ (otherwise, one can relabel nodes in either $G_1$ or $G_2$ to have $P$ equal to the identity matrix). Therefore, $G_1(i,j)=G_2(i',j')$ for all $1\leq i,j\leq n$. Recall that $\by$ is a vector of length $kn$ which has weights for all possible mapping edges $(i,j')\in\cR$. To simplify notations and without loss of generality, we re-order indices of vector $\by$ as follows:
\begin{itemize}
\item[-] The first $n$ indices of $\by$ correspond to correct mappings, i.e., $y(1)=y_{1,1'}, y(2)=y_{2,2'}, \ldots, y(n)=y_{n,n'}$.
\item[-] The remaining $(k-1)n$ indices of $\by$ correspond to incorrect mappings. e.g., $y(n+1)=y_{1,2'}, y(n+2)=y_{1,3'}, \ldots, y(kn)=y_{r,s'}$ ($r\neq s$).
\end{itemize}
Therefore, we can write,
$$
\by=\begin{lbmatrix}{1}
  \by_1 \\
  \hline
  \by_2
\end{lbmatrix},
$$
where $\by_1$ and $\by_2$ are vectors of length $n$ and $(k-1)n$, respectively.

We re-order rows and columns of the alignment matrix $A$ accordingly. Define the following notations: $\cS_1=\{1,2,\ldots,n\}$ and $\cS_2=\{n+1,n+2,\ldots,kn\}$. The alignment matrix $A$ for graphs $G_1$ and $G_2$ can be characterized using equation \eqref{eq:alignment-net} as follows:
\begin{align}\label{eq:alignment-erdos}
&A(t_1,t_2)=\\
&\begin{cases}
    (\alpha+1)G_1(i,j)G_2(i',j')-G_1(i,j)-G_2(i',j')+1+\eps, \\
    \text{if } t_1\sim(i,i'),t_2\sim(j,j'), t_1\ \text{and}\ t_2\in \cS_1, t_1\neq t_2. \\
    (\alpha+1)G_1(i,j)G_2(r',s')-G_1(i,j)-G_2(r',s')+1+\eps\\
    \text{if } t_1\sim(i,r'),t_2\sim(j,s'),t_1\ \text{or}\ t_2\in \cS_2, t_1\neq t_2.\\
    1+\eps,\\
    \text{if } t_1=t_2,
\end{cases}\nonumber
\end{align}
where notation $t_1\sim(i,r')$ means that, row (and column) index $t_1$ of the alignment matrix $A$ corresponds to the mapping edge $(i,r')$. Since $G_1$ and $G_2$ are isomorphic with permutation matrix $P=I$, we have $G_1(i,j)=G_2(i',j')$. Therefore, equation \eqref{eq:alignment-erdos} can be written as,
\begin{align}\label{eq:alignment-erdos2}
&A(t_1,t_2)=\\
&\begin{cases}
    (\alpha+1)G_1(i,j)^2-2G_1(i,j)+1+\eps\\
    \text{if } t_1\sim(i,i'),t_2\sim(j,j'),t_1\ \text{and}\ t_2\in \cS_1, t_1\neq t_2.\\
    (\alpha+1)G_1(i,j)G_1(r,s)-G_1(i,j)-G_1(r,s)+1+\eps\\
    \text{if } t_1\sim(i,r'),t_2\sim(j,s'), t_1\ \text{or}\ t_2\in \cS_2, t_1\neq t_2.\\
    1+\eps,\\
    \text{if } t_1=t_2.
\end{cases}\nonumber
\end{align}
Let $\barA$ be the expected alignment matrix, where $\barA(t_1,t_2)=\EE[A(t_1,t_2)]$, the expected value of $A(t_1,t_2)$ over different realizations of $G_1$ and $G_2$.
\begin{lemma}\label{lem:average}
Let $\bv$ be the eigenvector of the expected alignment matrix $\barA$ corresponding to the largest eigenvalue. Suppose
$$
\bv=\begin{lbmatrix}{1}
  \bv_1 \\
  \hline
  \bv_2
\end{lbmatrix},
$$
where $\bv_1$ and $\bv_2$ are vectors of length $n$ and $(k-1)n$, respectively. Then,
\begin{eqnarray}
&&v_{1,1}=v_{1,2}=\ldots=v_{1,n}\triangleq v_1^*,\nonumber\\
&&v_{2,1}=v_{2,2}=\ldots=v_{2,(k-1)n}\triangleq v_2^*,\nonumber
\end{eqnarray}
Moreover, if $n\to \infty$, then,
\begin{equation}
\frac{v_1^*}{v_2^*}> 1+\Delta,
\end{equation}
where $0<\Delta k<\frac{(\alpha-1)p+1+\eps}{(\alpha+1)p^2-2p+1+\eps}-1$.
\end{lemma}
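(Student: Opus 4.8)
The plan is to compute the expected alignment matrix $\barA$ in closed form, use its block structure to collapse the eigenvector computation to a $2\times 2$ problem, and then estimate a single algebraic quantity --- the positive root of a quadratic.

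First I would evaluate $\barA(t_1,t_2)=\EE[A(t_1,t_2)]$ from \eqref{eq:alignment-erdos2}. Since $G_1(i,j)\in\{0,1\}$ with $\EE[G_1(i,j)]=p$ (so $\EE[G_1(i,j)^2]=p$, and $\EE[G_1(i,j)G_1(r,s)]=p^2$ whenever the two edges are distinct), every off-diagonal entry with $t_1,t_2\in\cS_1$ equals $a:=(\alpha+1)p-2p+1+\eps=(\alpha-1)p+1+\eps$, the diagonal entries equal $c:=1+\eps$, and every off-diagonal entry for which at least one of $t_1,t_2$ lies in $\cS_2$ equals $b:=(\alpha+1)p^{2}-2p+1+\eps$, with the sole exception of the $\cO(n)$ pairs of ``conflicting'' mapping edges that share an endpoint in $G_1$ or $G_2$ (there the two relevant edges of $G_1$ coincide or one is forced absent). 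Hence $\barA=\barA_0+E$, where $\barA_0$ is the block-constant matrix with value $a$ on the $\cS_1\times\cS_1$ off-diagonal block, $b$ on all remaining off-diagonal entries, and $c$ on the diagonal, and $E$ has $\cO(n)$ entries each of order one, so $\|E\|_{op}\le\|E\|_{F}=\cO(\sqrt n)=o(n)$; for the idealized mean-field matrix, $E$ vanishes.

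Next I would exploit the structure of $\barA_0$. A direct computation shows that summing any row of $\barA_0$ over $\cS_1$ (respectively over $\cS_2$) yields a value depending only on which of $\cS_1,\cS_2$ contains the row index; equivalently, the two-dimensional subspace $W$ of vectors that are constant on $\cS_1$ and constant on $\cS_2$ is invariant under $\barA_0$. On $W^{\perp}$, $\barA_0$ acts diagonally with eigenvalues $c-a$ and $c-b$, both of order one, whereas the $2\times 2$ restriction of $\barA_0$ to $W$ has trace of order $n$ and determinant of order $n^{2}$, using $a-b=(\alpha+1)p(1-p)>0$ and $b>0$ (both strict under $0<p<1/2$, $\alpha>1$); hence both its eigenvalues are positive and of order $n$, and the larger one is separated by a gap of order $n$ from the rest of the spectrum of $\barA_0$. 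Since all entries of $\barA_0$ are strictly positive for large $n$, Perron--Frobenius forces its leading eigenvector into $W$ with strictly positive entries; combining this gap with $\|E\|_{op}=o(n)$ and standard eigenvector perturbation, the leading eigenvector $\bv$ of $\barA$ is, up to $o(1)$, constant on $\cS_1$ and constant on $\cS_2$. This is precisely the first claim, $v_{1,1}=\cdots=v_{1,n}=:v_1^{*}$ and $v_{2,1}=\cdots=v_{2,(k-1)n}=:v_2^{*}$, with $v_1^{*},v_2^{*}>0$ by Theorem~\ref{thm:perron}. I would then substitute this block-constant form into $\barA_0\bv=\mu\bv$: the $\cS_1$ block row gives $\big((n-1)a+c\big)v_1^{*}+(k-1)n\,b\,v_2^{*}=\mu v_1^{*}$ and the $\cS_2$ block row gives $n\,b\,v_1^{*}+\big(((k-1)n-1)b+c\big)v_2^{*}=\mu v_2^{*}$. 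Eliminating $\mu$ and writing $\beta:=v_1^{*}/v_2^{*}$, the terms in $c$ cancel; dividing by $n$ and letting $n\to\infty$ reduces the identity to the quadratic $q(\beta):=b\beta^{2}+\big((k-1)b-a\big)\beta-(k-1)b=0$, whose relevant root is its unique positive one, $\beta^{*}$.

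Finally I would estimate $\beta^{*}$. With $\delta:=\tfrac{a-b}{bk}>0$, a short computation gives $q(1+\delta)=\delta\big(b-a+b\delta\big)=-\delta(a-b)\tfrac{k-1}{k}<0$; since $q$ is an upward parabola with $q(0)=-(k-1)b<0$, this forces $1+\delta<\beta^{*}$, i.e. $\beta^{*}>1+\tfrac1k\big(\tfrac ab-1\big)=1+\tfrac1k\Big(\tfrac{(\alpha-1)p+1+\eps}{(\alpha+1)p^{2}-2p+1+\eps}-1\Big)$. Therefore $\beta^{*}>1+\Delta$ for every $\Delta$ with $0<\Delta k<\tfrac ab-1$, and since $v_1^{*}/v_2^{*}\to\beta^{*}$ as $n\to\infty$ we obtain $v_1^{*}/v_2^{*}>1+\Delta$ for $n$ large, which is the second claim. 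The step I expect to be the main obstacle is the third paragraph: rigorously passing from the merely approximately block-constant matrix $\barA=\barA_0+E$ to an asymptotically block-constant leading eigenvector, i.e. controlling the conflicting-mapping entries in $E$ and the spectral gap of $\barA_0$; once that is in place, the $2\times2$ reduction and the quadratic-root estimate are routine algebra, as is checking $a>b$ so that the asserted range of $\Delta$ is nonempty.
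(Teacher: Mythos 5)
Your proposal is correct and follows essentially the same route as the paper's proof: the two-valued expected alignment matrix with off-diagonal entries $a=(\alpha-1)p+1+\eps$ (within $\cS_1$) and $b=(\alpha+1)p^2-2p+1+\eps$ (otherwise) as in \eqref{eq:alignment-erdos22}, a block-constant leading eigenvector, the same $2\times 2$ system \eqref{eq:lemma3}, and the same limiting quadratic for $\beta=v_1^*/v_2^*$, which the paper solves in closed form as $\tfrac12\big[(\tfrac ab-k+1)+\sqrt{(\tfrac ab-k+1)^2+4k-4}\,\big]$ and then bounds below by $1+\Delta$, while you bound the positive root by a sign test at $1+\delta$ with $\delta=(a-b)/(bk)$ --- an equivalent piece of algebra. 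The only substantive difference is your decomposition $\barA=\barA_0+E$ to account for the $\cO(n)$ exceptional pairs of mappings sharing an endpoint or reusing the same edge of $G_1$: the paper simply takes $\barA$ to be exactly the two-valued mean-field matrix \eqref{eq:alignment-erdos22}, which is what yields the lemma's exact within-block equality of eigenvector entries directly from the eigenvector equation \eqref{eq:lemma1}; your perturbation argument is extra care about the true entrywise expectation, but as you observe it only gives approximate constancy unless $E=0$, so for the literal statement one should read $\barA$ as the idealized matrix, in which case your invariant-subspace argument reduces to the paper's computation.
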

\begin{proof}
Since $G_1(i,j)$ is a Bernoulli random variable which is one with probability $p$, equation \eqref{eq:alignment-erdos2} leads to:
\begin{equation}\label{eq:alignment-erdos22}
\barA(t_1,t_2)=\begin{cases}
    (\alpha-1)p+1+\eps,& \text{if } t_1\ \text{and}\ t_2\in \cS_1, t_1\neq t_2,\\
    (\alpha+1)p^2-2p+1+\eps,& \text{if } t_1\ \text{or}\ t_2\in \cS_2, t_1\neq t_2,\\
    1+\eps, & \text{if } t_1=t_2.
\end{cases}
\end{equation}
Define $a\triangleq (\alpha-1)p+1+\eps$ and $b\triangleq(\alpha+1)p^2-2p+1+\eps$. Since $bv$ is an eigenvector of $\barA$, we have,
\begin{equation}
\barA \bv= \lambda \bv,
\end{equation}
where $\lambda$ is the corresponding eigenvalue of $bv$. Therefore,
\begin{equation}
\barA \bv=\begin{lbmatrix}{1}
  a \sum_{i} v_{1,i}+b\sum_{j} v_{2,j}+(1+\eps-a)v_{1,1} \\
  \vdots\\
  a \sum_{i} v_{1,i}+b\sum_{j} v_{2,j}+(1+\eps-a)v_{1,n}\\
  \hline
  b \sum_{i} v_{1,j}+b\sum_{j} v_{2,j}+(1+\eps-a)v_{2,1}\\
  \vdots\\
  b \sum_{i} v_{1,i}+b\sum_{j} v_{2,j}+(1+\eps-a)v_{2,(k-1)n}
\end{lbmatrix}=\lambda \begin{lbmatrix}{1}
  v_{1,1} \\
  \vdots\\
  v_{1,n}\\
  \hline
  v_{2,1}\\
  \vdots\\
  v_{2,(k-1)n}
\end{lbmatrix}.
\end{equation}
Therefore,
\begin{align}\label{eq:lemma1}
&a\sum_{i}v_{1,i}+b\sum_{j}v_{2,j}=v_{1,r} (\lambda+a-1-\eps),\quad\quad\forall 1 \leq r\leq n, \\
&b\sum_{i}v_{1,i}+b\sum_{j}v_{2,j}=v_{2,s} (\lambda+b-1-\eps),\quad\quad\forall 1 \leq s\leq (k-1)n.\nonumber
\end{align}
We choose $\eps$ so that $\lambda+a-1-\eps\neq 0$ and $\lambda+b-1-\eps\neq 0$. We will show later in this section that any sufficiently small value of $\eps$ satisfies these inequalities. Therefore, equation \eqref{eq:lemma1} leads to,
\begin{eqnarray}\label{eq:lemma2}
&&v_{1,1}=v_{1,2}=\ldots=v_{1,n}=v_1^*,\\
&&v_{2,1}=v_{2,2}=\ldots=v_{2,(k-1)n}=v_2^*.\nonumber
\end{eqnarray}
Using equations (\ref{eq:lemma1}) and (\ref{eq:lemma2}), we have,
\begin{equation}\label{eq:lemma3}
\begin{cases}
    anv_1^*+b(k-1)nv_2^*=v_1^*(\lambda+a-1-\eps)& \\
    bnv_1^*+b(k-1)nv_2^*=v_2^*(\lambda+b-1-\eps).&
\end{cases}
\end{equation}
We choose $\eps$ so that $\lambda+b\big(1-(k-1)n\big)-1-\eps\neq 0$. We will show later in this section that any sufficiently small value of $\eps$ satisfies this inequality. Further, according to the Perron-Frobenius Theorem \ref{thm:perron}, $v_{1,i}>0$ and $v_{2,j}>0$, for all $i$ and $j$. Under these conditions, solving equation (\ref{eq:lemma3}) leads to:
\begin{equation}\label{eq:lemma4}
(\lambda-\lambda_a)(\lambda-\lambda_b)=b^2(k-1)n^2,
\end{equation}
where,
\begin{equation}\label{eq:lemma5}
\begin{cases}
    \lambda_a=(n-1)a+1+\eps,&\\
    \lambda_b=\big((k-1)n-1\big)b+1+\eps.&
\end{cases}
\end{equation}
Equation \eqref{eq:lemma4} has two solutions for $\lambda$. However, since $\lambda$ is the largest eigenvalue of the expected alignment matrix $\barA$, we choose the largest solution. Note that, since $b^2(k-1)n^2>0$, we have $\lambda > \max(\lambda_a,\lambda_b)$. This guarantees conditions that we put on $\eps$ in the early steps of the proof.

By solving equations \eqref{eq:lemma4} and \eqref{eq:lemma5}, we have,
\begin{equation}
\lambda=\frac{\lambda_a+\lambda_b+\sqrt{(\lambda_a-\lambda_b)^2+4(k-1)b^2n^2}}{2}.\nonumber
\end{equation}
First, we show $v_1^*>v_2^*$. As $n\to\infty$, equation \eqref{eq:lemma3} implies,
\begin{equation}\label{eq:lemma6}
\frac{v_1^*}{v_2^*}=\frac{\lambda}{bn}-k+1,
\end{equation}
where $\lambda$ is the largest root of equation \eqref{eq:lemma4}. For sufficiently large $n$,
\begin{equation}\label{eq:lemma7}
\frac{v_1^*}{v_2^*}=\frac{1}{2}[(\frac{a}{b}-k+1)+\sqrt{(\frac{a}{b}-k+1)^2+4k-4}].
\end{equation}
If $p\neq 0,1$, we always have $a>b$. Therefore, there exists $\Delta>0$ such that $\frac{a}{b}>1+\Delta k$. Thus, we have,
\begin{equation}\label{eq:lemma8}
\frac{a}{b}>1+\Delta k>1+\Delta(1+\frac{k-1}{1+\Delta})=1+\Delta+\frac{\Delta}{\Delta+1} (k-1).
\end{equation}
Using inequality \eqref{eq:lemma8} in \eqref{eq:lemma7}, we have,
\begin{align}
\frac{v_1^*}{v_2^*}>\frac{1}{2}\bigg[&\frac{(1+\Delta)^2-k+1}{1+\Delta}\\
&+\frac{\sqrt{((1+\Delta)^2-k+1)^2+4(k-1)(1+\Delta)^2}}{1+\Delta}\bigg]=1+\Delta.\nonumber
\end{align}
This completes the proof of Lemma \ref{lem:average}.
\end{proof}

\subsection{Proof of Theorem \ref{thm:erdos-noisy}} \label{subsec:proof-erdos-noisy}
Without loss of generality and similarly to the proof of Theorem \ref{thm:erdos}, let $P=I$. Let $A$ be the alignment graph of $G_1$ and $G_2$ defined according to equation \eqref{eq:alignment-net}. Similarly to the proof of Theorem \ref{thm:erdos}, re-order row (and column) indices of matrix $A$ so that the first $n$ indices correspond to the true mappings $\{(i,i'):i\in\cV_1, i'\in\cV_2\}$. Define the expected alignment graph $\barA$ as $\barA(t_1,t_2)=\EE[A(t_1,t_2)]$, where $t_1$ and $t_2$ are two possible mappings across graphs. Recall notations $\cS_1=\{1,2,\ldots,n\}$ and $\cS_2=\{n+1,n+2,\ldots,kn\}$.

First, we consider the noise model I \eqref{eq:noise-model1}. Define,
\begin{align}\label{eq:a-b-noisy}
a' \triangleq& p(1-p_e)(\alpha+\eps)+(1-p)(1-p_e) (1+\eps)+ (pp_e+(1-p)p_e)\eps \\
b' \triangleq& \big(p^2(1-p_e)+p p_e(1-p)\big)(\alpha+\eps)\nonumber\\
+& \big((1-p)^2(1-p_e)+p p_e (1-p)\big) (1+\eps)\nonumber\\
+& \big( 2 p (1-p) (1-p_e)+ 2 p^2 p_e \big) \eps.\nonumber
\end{align}
Since $G_1(i,j)$ and $Q(i,j)$ are Bernoulli random variables with parameters $p$ and $p_e$, respectively, the expected alignment graph can be simplified as follows:
\begin{equation}\label{eq:alignment-noisy}
\barA(t_1,t_2)=\begin{cases}
    a',& \text{if } t_1\ \text{and}\ t_2\in \cS_1, t_1\neq t_2,\\
    b',& \text{if } t_1\ \text{or}\ t_2\in \cS_2, t_1\neq t_2,\\
    1+\eps, & \text{if } t_1=t_2.
\end{cases}
\end{equation}
We have,
\begin{align}
a'-b'= (\alpha+1)(2p_e-1)p(p-1)+p_e(1-2p)\eps.
\end{align}
Thus, if $p\neq 0,1$ and $p_e<1/2$, for small enough $\eps$, $a'>b'>0$. Therefore, there exists a positive $\Delta$ such that $\frac{a'}{b'}=1+\Delta$. The rest of the proof is similar to the one of Theorem \ref{thm:erdos}.

The proof for the noise model II of \eqref{eq:noise-model2} is similar. To simplify notation and illustrate the main idea, here we assume $\eps$ is sufficiently small with negligible effects. Define,
\begin{align}\label{eq:a-b-noisy-model2}
a'' \triangleq& p(1-p_e)(\alpha)+(1-p)(1-\pet)= 1-p\big(1+\alpha(p_e-1)+p_e\big) \\
b'' \triangleq& p^2(1-p_e)\alpha+(1-p)^2(1-\pet)+2p(1-p)\pet(1+\alpha)\nonumber\\
=& 1-p(2+p_e)+p^2(1+\alpha+2p_e).\nonumber
\end{align}
The expected alignment graph in this case is:
\begin{equation}\label{eq:alignment-noisy22}
\barA(t_1,t_2)=\begin{cases}
    a'',& \text{if } t_1\ \text{and}\ t_2\in \cS_1, t_1\neq t_2,\\
    b'',& \text{if } t_1\ \text{or}\ t_2\in \cS_2, t_1\neq t_2,\\
    1+\eps, & \text{if } t_1=t_2.
\end{cases}
\end{equation}
Moreover, we have,
\begin{align}
a''-b''= p\big((1-p-p_e)(1+\alpha)+p_e(1-2p)\big).
\end{align}
If $p<1/2$ and $p_e<1/2$, then $a''-b''>0$. The rest of the proof is similar to the previous case.

\subsection{Proof Of Theorem \ref{thm:gaurantee}}\label{subsec:proof-thm-gaurantee}
By writing Taylor's expansion of $Tr(G_1 X G_2 X^T)$ around the point $X_0$, we have,
\begin{align}
Tr (G_1 X G_2 X^T)=& Tr (G_1 X_0 G_2 X_0^T)+ 2Tr (G_1 X_0 G_2 (X-X_0)^T)\\
&+ Tr (G_1 (X-X_0) G_2 (X-X_0)^T).\nonumber
\end{align}
Let $\Delta\triangleq X-X_0$. Thus, we have,
\begin{align}\label{eq:bound1}
f(X)= \tf(X)+ Tr (G_1 \Delta G_2 \Delta^T).
\end{align}
Let $\sigma_i(G)$ be the $i$-th largest singular value of matrix $G$.
\begin{theorem}[Von Neumann's trace inequality]\label{thm:von-neumann}
Suppose $A$ and $B$ are two $n \times n$ complex matrices. We have,
\begin{align}
|Tr (AB)|\leq \sum_{i=1}^n \sigma_i(A) \sigma_i(B).
\end{align}
\end{theorem}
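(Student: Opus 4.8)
The plan is to reduce the inequality to a linear optimization over doubly stochastic matrices using the singular value decompositions of $A$ and $B$, and then finish with the Birkhoff--von Neumann theorem together with the rearrangement inequality. First I would write $A=U_A\Sigma_A V_A^*$ and $B=U_B\Sigma_B V_B^*$, where $U_A,V_A,U_B,V_B$ are unitary and $\Sigma_A,\Sigma_B$ are the diagonal matrices of singular values arranged in decreasing order, so that $(\Sigma_A)_{ii}=\sigma_i(A)$ and $(\Sigma_B)_{ii}=\sigma_i(B)$. Using the cyclic invariance of the trace, $Tr(AB)=Tr(\Sigma_A W \Sigma_B Z)$, where $W\triangleq V_A^* U_B$ and $Z\triangleq V_B^* U_A$ are both unitary. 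Expanding the right-hand side entrywise and using that $\Sigma_A,\Sigma_B$ are diagonal gives
\begin{align}
Tr(AB)=\sum_{i=1}^n\sum_{j=1}^n \sigma_i(A)\,\sigma_j(B)\, W_{ij}\, Z_{ji}.\nonumber
\end{align}

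Next I would bound the modulus. Since all singular values are nonnegative,
\begin{align}
|Tr(AB)|\le \sum_{i,j}\sigma_i(A)\,\sigma_j(B)\,|W_{ij}|\,|Z_{ji}|.\nonumber
\end{align}
Applying the AM--GM inequality $|W_{ij}|\,|Z_{ji}|\le \tfrac12(|W_{ij}|^2+|Z_{ji}|^2)$ and setting $D_{ij}\triangleq \tfrac12(|W_{ij}|^2+|Z_{ji}|^2)$, the key structural claim is that $D=(D_{ij})$ is doubly stochastic. This follows because $W$ and $Z$ are unitary: each row and column of $(|W_{ij}|^2)$ sums to one, and likewise for $(|Z_{ji}|^2)$, so summing over either index gives $\sum_j D_{ij}=\tfrac12(1+1)=1$ and $\sum_i D_{ij}=1$. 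Hence
\begin{align}
|Tr(AB)|\le \sum_{i,j}\sigma_i(A)\,\sigma_j(B)\,D_{ij},\quad D \text{ doubly stochastic}.\nonumber
\end{align}

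Finally, I would maximize the right-hand side over all doubly stochastic $D$. By the Birkhoff--von Neumann theorem the set of doubly stochastic matrices is the convex hull of the permutation matrices, and since the objective is linear in $D$ its maximum is attained at a permutation matrix, i.e. equals $\sum_i \sigma_i(A)\,\sigma_{\pi(i)}(B)$ for some permutation $\pi$. Because both singular-value sequences are sorted in decreasing order, the rearrangement inequality shows this sum is largest when $\pi$ is the identity, yielding the bound $\sum_i \sigma_i(A)\,\sigma_i(B)$ and completing the proof. The main obstacle is the middle step: recognizing that the crude entrywise bound can be symmetrized via AM--GM into a genuinely doubly stochastic matrix $D$, which is exactly what makes the Birkhoff--von Neumann reduction available; without this symmetrization the matrix $(|W_{ij}|\,|Z_{ji}|)$ need not have unit row and column sums, and the clean reduction to permutations would fail.
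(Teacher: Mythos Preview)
Your argument is correct: the reduction to $Tr(\Sigma_A W \Sigma_B Z)$ with $W,Z$ unitary, the AM--GM symmetrization producing the doubly stochastic matrix $D_{ij}=\tfrac12(|W_{ij}|^2+|Z_{ji}|^2)$, and the Birkhoff--von Neumann plus rearrangement finish are all valid and constitute a complete, standard proof of the inequality.

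As for comparison with the paper: the paper does not actually prove this theorem. It simply cites Mirsky's 1975 note (``A trace inequality of John von Neumann'') and uses the result as a black box within the proof of Theorem~\ref{thm:gaurantee}. So you have supplied strictly more than the paper does here. Your approach is in fact close in spirit to Mirsky's own treatment, which also passes through the singular value decompositions and a doubly-stochastic averaging argument; the AM--GM step you highlight is precisely the device that makes the reduction to permutations go through, and you have identified it correctly as the crux.
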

\begin{proof}
See Theorem 1 of \cite{mirsky1975trace}.
\end{proof}

Using Theorem \ref{thm:von-neumann}, we have,
\begin{align}\label{eq:bound2}
|Tr (G_1 \Delta G_2 \Delta^T)|\leq \sum_{i=1}^{n}\sigma_i(G_1 \Delta) \sigma_i(G_1 \Delta^T).
\end{align}
Moreover, we have,
\begin{align}\label{eq:bound3}
\sigma_i(G_1\Delta) &\leq \min\{\sigma_k(G_1)\sigma_{i+1-k}(\Delta): 1\leq k\leq i\}\\
&\leq \sigma_i(G_1)\sigma_{1}(\Delta).\nonumber
\end{align}
Moreover,
\begin{align}\label{eq:bound4}
\sigma_i(G_1\Delta) &\geq \max\{\sigma_k(G_1)\sigma_{n+i-k}(\Delta): i\leq k\leq n\},\\
&\geq \sigma_i(G_1)\sigma_{n}(\Delta).\nonumber
\end{align}
Moreover, since $\|\Delta\|_{2}\leq \epsilon$, we have $|\sigma_{i}(\Delta)|\leq \epsilon$, for $1\leq i\leq n$. Using \eqref{eq:bound3} and \eqref{eq:bound4}, we have,
\begin{align}\label{eq:bound5}
|\sigma_{i}(G_1\Delta)|\leq \epsilon \sigma_{i}(G_1).
\end{align}
Similarly, we have,
\begin{align}\label{eq:bound6}
|\sigma_{i}(G_2\Delta^T)|\leq \epsilon \sigma_{i}(G_2).
\end{align}
Thus, using \eqref{eq:bound2} and \eqref{eq:bound6}, we have,
\begin{align}\label{eq:bound7}
|Tr (G_1 \Delta G_2 \Delta^T)|\leq \epsilon^2 \sum_{i=1}^{n}\sigma_i(G_1) \sigma_i(G_2).
\end{align}
Using \eqref{eq:bound1} and \eqref{eq:bound7}, we have,
\begin{align}
|f(X)-\tf(X)|\leq \epsilon^2 \sum_{i=1}^{n}\sigma_i(G_1) \sigma_i(G_2).
\end{align}
Moreover, since $X^*$ and $X_{lin}^*$ are optimal solutions of optimizations \eqref{eq:qap-overlap-perm} and \eqref{eq:qap-overlap-linear}, respectively, we have,
\begin{align}
f(X^*) & \leq \tf(X^*)+ \epsilon^2 \sum_{i=1}^{n}\sigma_i(G_1) \sigma_i(G_2)\\
& \leq \tf(X_{lin}^*)+ \epsilon^2 \sum_{i=1}^{n}\sigma_i(G_1) \sigma_i(G_2)\nonumber\\
f(X^*) & \geq f(X_{lin}^*) \geq \tf(X_{lin}^*)- \epsilon^2 \sum_{i=1}^{n}\sigma_i(G_1) \sigma_i(G_2).\nonumber
\end{align}
This completes the proof.

\newpage
\noindent
{\bfseries \huge Appendix}

\appendix

\section{Inference of Regulatory Networks}
In this section we leverage genome-wide functional genomics datasets from ENCODE and modENCODE consortia to infer regulatory networks across human, fly, and worm. In Section 6 in the main text, we compare the structure of these inferred networks using network alignment techniques.

The temporal and spatial expression of genes is coordinated by a hierarchy of transcription factors (TFs), whose interactions with each other and with their target genes form directed regulatory networks\cite{int1}. In addition to individual interactions, the structure of a regulatory network captures a broad systems-level view of regulatory and functional processes, since genes cluster into modules that perform similar functions \cite{int2,int3,int4}.  Accurate inference of these regulatory networks is important both in the recovery and functional characterization of gene modules, and for comparative genomics of regulatory networks across multiple species \cite{int6,int7}. This is especially important because animal genomes, as fly, worm, and mouse are routinely used as models for human disease \cite{int8,int9}.

Here, we infer regulatory networks of human, and model organisms {\it D. melanogaster} fly, and {\it C. elegans} worm, three of the most distant and deeply studied metazoan species. To infer regulatory interactions among transcription factors and target genes in each species, we combine genome-wide transcription factor binding profiles, conserved sequence motif instances \cite{int10} and gene expression levels \cite{int11,int12} for multiple cell types that have been collected by the ENCODE and modENCODE consortia. The main challenge is to integrate these diverse evidence sources of gene regulation in order to infer robust and accurate regulatory interactions for each species.

Ideally, inference of regulatory networks would involve performing extensive all-against-all experiments of chromatin immune-precipitation (ChIP) assays for every known transcription factor in every cell type of an organism, in order to identify all potential targets of TFs, followed by functional assays to verify that a TF-gene interaction is functional \cite{int4, int13}. However, the combinatorial number of pairs of TFs and cell types makes this experiment prohibitively expensive, necessitating the use of methods to reduce dimensionality of this problem. Here, we first infer three types of feature-specific regulatory connections based on functional and physical evidences and then integrate them to infer regulatory interactions in each species (Figure \ref{fig:inference-framework}-a). One feature-specific network is based on using sequence motifs to scan the genome for instances of known binding sites of each TF, and then match predicted binding instances to nearby genes (a motif network). A second approach is to map TFs to genes nearby their ChIP peaks using a window-based approach (a ChIP binding network). The third feature specific network uses gene expression profiles under different conditions in order to find groups of genes that are correlated in expression and therefore likely to function together (an expression-based network).

Previous work \cite{int4} has shown that, while ChIP networks are highly informative of true regulatory interactions, the number of experiments that can be carried out is typically very small, yielding a small number of high confidence interactions.  Motif networks tend to be less informative than ChIP networks, but yield more coverage of the regulatory network, while co-expression based networks tend to include many false-positive edges and are the least informative \cite{int13,int14}. However, integration of these three networks \cite{int3,int15,int16, int17} into a combined network yield better performance than the individual networks in terms of recovering known regulatory interactions, by predicting interactions that tend to be supported by multiple lines of evidence. Here, we use an integration approach that combines interaction ranks across networks \cite{int17}. Inferred regulatory interactions show significant overlap with known interactions in human and fly, indicating the accuracy and robustness of the used inference pipeline. In the following, we explain our network inference framework with more details.
\subsection{Inference of feature specific regulatory networks}\label{subsec:inference-ind-nets}
For each species, we form feature-specific regulatory networks using functional (gene expression profiles) and physical (motif sequences and ChIP peaks) evidences as follows:

{\bf Functional association networks.} Expression-based networks represent interactions among TFs and target genes which are supported by correlation in gene expression levels across multiple samples \cite{int1,int21,int22,int23}. There are several methods to infer regulatory networks using gene expression profiles \cite{int17}. The input for these algorithms is a gene by condition matrix of expression values. The output of these methods are expression-based regulatory networks. We use the side information of TF lists to remove outgoing edges from target genes (in fact, TF lists are used as inputs to network inference algorithms to enhance their performance by limiting search space of the methods.).

To reduce bias and obtain a single expression-based network for each species, we combine results of two different expression-based network inference methods (Figure \ref{fig:inference-framework}-a): one method is CLR \cite{int11} (context likelihood of relatedness) which constructs expression networks using mutual information among gene expression profiles along with a correction step to eliminate background correlations. The second method used is GENIE3 \cite{int12} (Gene Network Inference with Ensemble of Trees) which is a tree-based ensemble method that decomposes the network inference problem to several feature selection subproblems. In each subproblem, it identifies potential regulators by performing a regression analysis using random forest. GENEI3 has been recognized as the top-performing expression based inference method in the DREAM5 challenge \cite{int17}.

Table \ref{tab:exp} summarizes the number of genes and TFs in expression-based regulatory networks. These numbers refer to genes and TFs that are mapped to Entrez Gene IDs \cite{int24}, the standard IDs that we use throughout our analysis. As it is illustrated in this table, expression based networks cover most of potential regulatory edges from TFs to targets. Despite a high coverage, however, the quality of inferred expression networks are lower than the one for physical networks \cite{int13}. This can be partially owing to indirect effects and transitive interactions in expression-based regulatory networks \cite{int17}.

{\bf Physical association networks.} We form two physical regulatory networks for each of the considered species using two types of physical evidences as our inference features: In the first approach, we use conserved occurrences of known sequence motifs \cite{int10}, while in the second approach, we use experimentally defined TF binding occupancy profiles from ChIP assays of ENCODE and modENCODE \cite{int4,int13}. Table \ref{tab:chip} shows the number of TFs associated to motifs as well as the number of TFs with genome-wide ChIP profiles in human, fly and worm. TSS coordinates are based on the genome annotations from ENCODE and modENCODE for human and worm, respectively, and the FlyBase genome annotations (FB5.48) for fly.

\begin{figure*}[t]
  \centering
      \includegraphics[width=12cm]{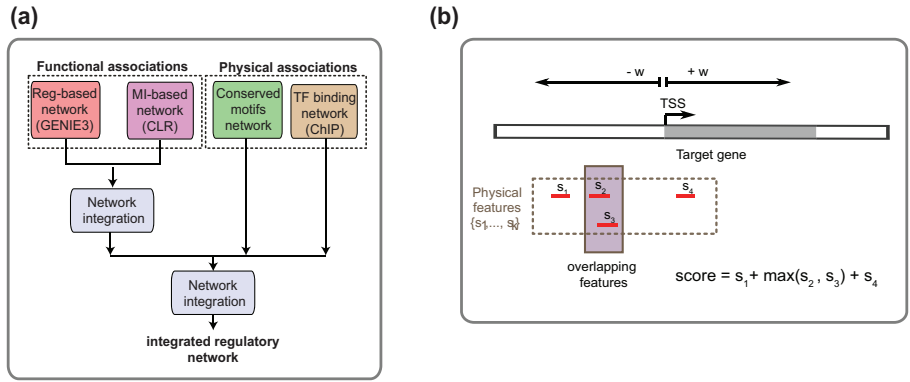}
  \caption{(a) The proposed framework to infer integrative regulatory networks. (b) The proposed framework to infer physical feature-specific regulatory networks. }
  \label{fig:inference-framework}
\end{figure*}

Each physical feature is assigned to a score: motif sequence features are assigned to conservation scores according to a phylogenetic framework \cite{int10}, while sequence read density of TFs determines scores of ChIP peaks. Further, two physical features are called overlapping if their corresponding sequences have a minimum overlap of $25\%$ in relation to their lengths (Jaccard Index $>0.25$).

Our inference algorithm is based on occurrence of these features (motif sequences or ChIP peaks) within a fixed window size around the transcription start site (TSS) of target genes (Figure \ref{fig:inference-framework}-b). We use a fixed window of 5kb around the transcription start site (TSS) in human and 1kb in fly and worm. Then, we apply a max-sum algorithm to assign weights to TF-target interactions in each case: we take the maximum score of overlapping features and sum the scores of non-overlapping ones. In ChIP networks, because read densities are not comparable across different TFs, we normalize TF-target weights for each TF by computing z-scores.

\subsection{Inference of integrated regulatory networks}\label{subsec:inferece-integrative}
Feature specific networks have certain biases and shortcomings. While Physical networks (motif and ChIP networks) show high quality considering overlap of their interactions with known interactions \cite{int13}, their coverage of the entire network is pretty low mostly owing to the cost of the experiments. On the other hand, while expression based networks have a larger coverage of regulatory networks compared to physical ones, they include many false-positive edges partially owing to indirect information flows \cite{int14}. To overcome these limitations, we therefore integrate these feature-specific regulatory interactions into a single integrated network \cite{int3,int15,int16,int17} (Figure \ref{fig:inference-framework}-a).

Suppose there are $K$ input feature-specific regulatory networks, each with $n$ genes and $m$ TFs (only TF nodes can have out-going edges in the network). Let $\wijl$ and $\wij$ represent interaction weights between TF $i$ and target gene $j$ in the input network $l$ and in the integrative network, respectively. We use a rank-based (borda) integration technique to infer integrated networks in considered species. In this approach, integrative weights are computed as follows:
\begin{align}
\wij=1/K \sum_{l=1}^{K} \rijl,
\end{align}
where $\rijl$ represents the rank of interactions $i\to j$ in the input network $l$. An edge with the maximum weight is mapped to the rank $nm$. We also assume non-existent interactions are mapped to rank 0 (if $\wijl=0$, then $\rijl=0$) \cite{int17}. Moreover, ties are broken randomly among edges with same weights.

We find that top-ranking integrative interactions in human and fly networks are primarily supported by ChIP and motif evidences, while worm interactions are primarily supported by co-expression edges, consistent with the lower coverage of worm ChIP and motif interactions (Figure \ref{fig:inference-inp-cont}).

To validate inferred integrated networks, we use known interactions in TRANSFAC \cite{int18}, REDfly \cite{int19} and EdgeDB \cite{int20} as human, fly and worm benchmarks, respectively. We assess the quality of various networks by using (a) the area under the receiver operating characteristic curve (AUROC); and (b) the area under the precision recall curve (AUPR), for each benchmark network (Figures \ref{fig:auroc-aupr}). Let $TP(k)$ and $FP(k)$ represent the number of true positives and false positives in top $k$ predictions, respectively. Suppose the total number of positives and negatives in the gold standard are represented by $P$ and $N$, respectively. Then, an ROC curve plots true positive rate vs. false positive rate ($TP(k)/P$ vs. $FP(k)/N$), while a PR curve plots precision ($TP(k)/k$) vs. recall ($TP(k)/P$). A high AUPR value indicates that, top predictions significantly overlap with known interactions, while a high AUROC value indicates the advantage of inferred predictions in discriminating true and false positives compared to random predictions (AUROC of a random predictor is 0.5).

\begin{figure*}[t]
  \centering
      \includegraphics[width=12cm]{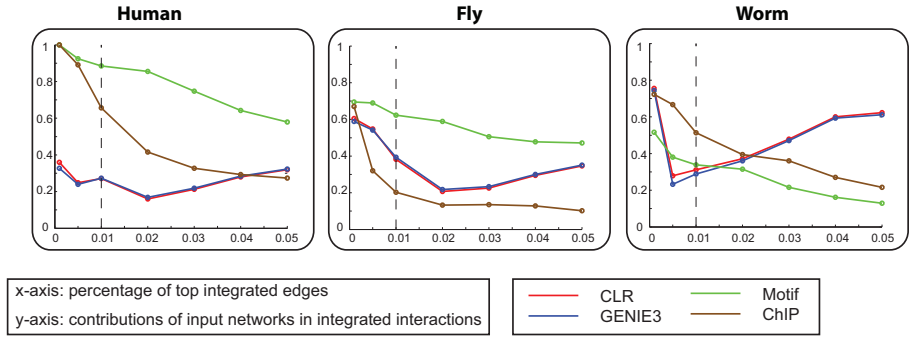}
  \caption{Contributions of input feature-specific networks in integrated interactions.}
  \label{fig:inference-inp-cont}
\end{figure*}

\begin{table}[t!]
\centering
\resizebox{6cm}{!}{
  \begin{tabular}{ | c | c | c | c | }
    \hline
     & \mbox{Human} & \mbox{Fly} & \mbox{Worm} \\ \hline
    \mbox{Genes} & 19,088 & 12,897 & 19,277 \\ \hline
    \mbox{TFs} & 2,757 & 675 & 905 \\ \hline

  \end{tabular}}
\caption{Number of genes and TFs covered by gene expression data.}
\label{tab:exp}
\end{table}

\begin{table}[t!]
\centering
\resizebox{6cm}{!}{
  \begin{tabular}{ | c | c | c | c | }
    \hline
     & \mbox{Human} & \mbox{Fly} & \mbox{Worm} \\ \hline
    \mbox{Motif network} & 485 & 221 & 30 \\ \hline
    \mbox{ChIP network} & 165 & 51 & 88 \\ \hline

  \end{tabular}}
\caption{Number of TFs covered by evolutionary conserved motifs and TF binding datasets.}
\label{tab:chip}
\end{table}

Figure \ref{fig:auroc-aupr} illustrates AUROC and AUPR scores for feature-specific and integrative networks, in different cut-offs, and in all three considered species. Considering the top $5\%$ of interactions in each weighted network as predicted edges, according to AUROC metric, integrative networks outperform feature-specific networks in all three species. In fact, AUROC values of integrative networks are 0.58 in human, 0.62 in fly, and 0.52 in worm, respectively. AUPR values of integrative networks are 0.019 in human, 0.047 in fly, and 0.037 in worm, respectively. Notably, all methods have low scores over the EdgeDB (worm) benchmark, which can be partially owing to sparse physical networks and/or systematic bias of EdgeDB interactions.

As the cut-off (network density) increases, AUROC values of integrative networks tend to increase while their AUPR scores are decreasing in general. This is because of the fact that, the rate of true positives is lower among medium ranked interactions compared to top ones. Considering both AUROC and AUPR curves for all species, we binarize networks using their top $5\%$ interactions which leads to balanced values of AUROC and AUPR in all inferred networks. This results in $2.6M$ interactions in human, $469k$ in fly and $876k$ in worm. In integrative networks, the median number of targets for each TF is 253 in human, 290 in fly and 640 in worm, with a median of 132 regulators per gene in human, 29 in fly, and 43 in worm.

Unlike EigenAlign, other considered network alignment methods do not take into account the directionality of edges in their network alignment setup. Thus, to have fair performance assessments of considered network alignment methods, we create un-directed co-regulatory networks using inferred regulatory networks by connecting genes when their parent TFs have an overlap larger than $25\%$. This results in undirected binary networks in human, fly, and worm, with $19,221$, $13,642$, and $19,296$ nodes, and $13.9\%$, $3.5\%$, and $4.2\%$ edge densities, respectively.

\begin{figure*}[t]
  \centering
      \includegraphics[width=12cm]{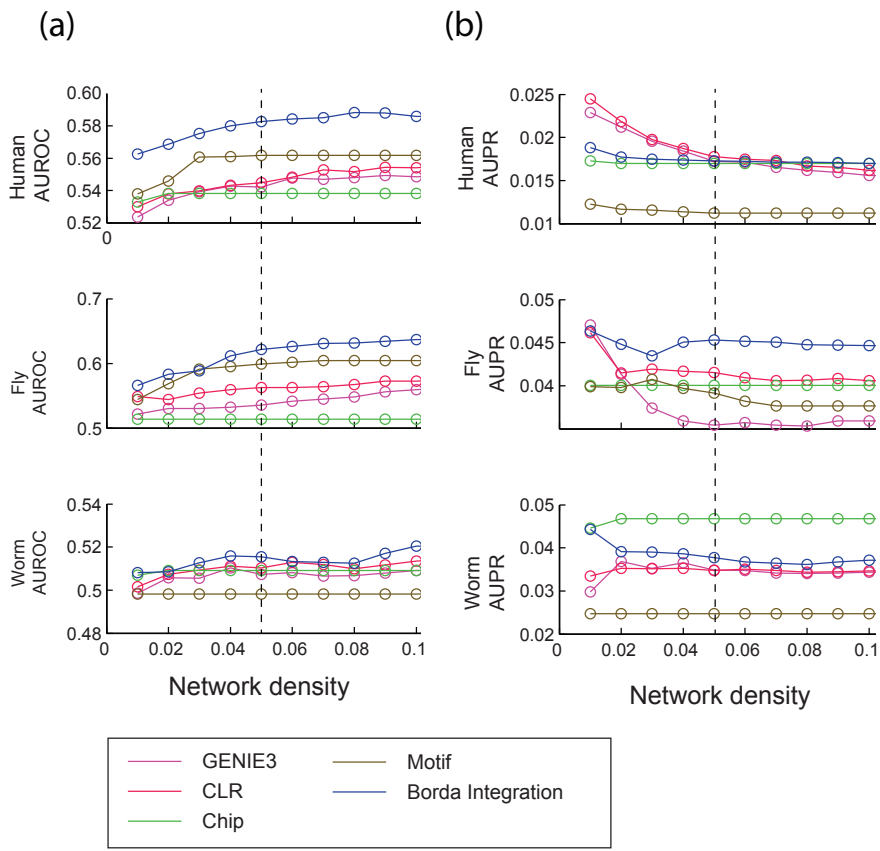}
  \caption{AUROC and AUPR scores of feature-specific and integrated regulatory networks in human, fly and worm species.}
  \label{fig:auroc-aupr}
\end{figure*}


\end{document}